\documentclass[aps,pra,preprint,superscriptaddress,longbibliography,nofootinbib]{revtex4-1}
\usepackage[utf8]{inputenc}
\usepackage{amsfonts}
\usepackage{amsmath}
\usepackage{amssymb}
\usepackage{amsthm}
\usepackage{mathtools}
\usepackage{enumerate}
\usepackage[shortlabels]{enumitem}
\usepackage{array}
\usepackage{bbm}
\usepackage{graphicx}
\usepackage{booktabs}
\usepackage{hyperref}
\usepackage{xcolor}
\usepackage{MnSymbol}

\newcommand\C{\mathbb{C}}
\newcommand\R{\mathbb{R}}

\newcommand\mc[1]{\mathcal{#1}}
\newcommand\cH{\mc{H}}
\newcommand\cK{\mc{K}}

\newcommand\cS{\mathcal{S}}
\newcommand\cA{\mc{A}}
\newcommand\cB{\mc{B}}

\newcommand\SAB{\cS(\cA \otimes \cB)}
\newcommand\cJ{\mc{J}}

\newcommand\cN{\mc{N}}
\newcommand\SN{\mc{S}(\mc{N})}

\newcommand\op{\mathrm{op}}

\newcommand\ad{\mathrm{ad}}
\newcommand\Ad{\mathrm{Ad}}

\newcommand\ra{\rightarrow}
\newcommand\mt{\mapsto}

\newcommand\tr{\mathrm{tr}}

\newtheorem{theorem}{Theorem}
\newtheorem{lemma}{Lemma}
\newtheorem{proposition}{Proposition}
\newtheorem{corollary}{Corollary}
\newtheorem{definition}{Definition}

\begin{document}

\title{Bipartite entanglement and the arrow of time}
\author{Markus Frembs}
\email{m.frembs@griffith.edu.au}
\affiliation{Centre for Quantum Dynamics, Griffith University,\\ Yugambeh Country, Gold Coast, QLD 4222, Australia}

\begin{abstract}
    We provide a new perspective on the close relationship between entanglement and time. Our main focus is on bipartite entanglement, where this connection is foreshadowed both in the positive partial transpose criterion due to Peres [A. Peres, Phys. Rev. Lett., \textbf{77}, 1413 (1996)] and in the classification of quantum within more general non-signalling bipartite correlations [M. Frembs and A. D\"oring, \url{http://arxiv.org/abs/2204.11471}]. Extracting the relevant common features, we identify a necessary and sufficient condition for bipartite entanglement in terms of a compatibility condition with respect to time orientations in local observable algebras, which express the dynamics in the respective subsystems. We discuss the relevance of the latter in the broader context of von Neumann algebras and the thermodynamical notion of time naturally arising within the latter.\\
\end{abstract}

\maketitle

\section{Introduction}\label{sec: introduction}

The connection between quantum entanglement and the arrow of time has been the subject of numerous research enterprises, some recent ones include \cite{PageWootters1983,JenningsRudolph2010,LinMarcolliOoguriStoica2015,Susskind2016,Racorean2019}.
Here, we mainly focus on bipartite entanglement. It has been surmised that the operation of partial transposition in the positive partial transpose criterion for bipartite entanglement due to Peres \cite{Peres1996} (henceforth referred to as the `PPT criterion'\footnote{Sometimes the criterion is also referred to as the Peres-Horodecki criterion \cite{Peres1996,Horodeckisz1996}})
is related to time reversal \cite{Horodeckisz2009}. Yet, this relationship seems not to have been made precise before. 

A related area of research, where time unexpectedly enters the picture, is the classification of quantum from non-signalling bipartite correlations. More precisely, the present author and Andreas D\"oring have recently shown that quantum states are characterised by a compatibility condition with respect to time orientations---roughly, the unitary evolution---in local observable algebras \cite{FrembsDoering2022a}.

We review the basics of and extract some key insights from these results in the following subsections. Building on those, in Sec.~\ref{sec: Generalised PPT criterion} we prove a necessary and sufficient criterion for bipartite entanglement. In Sec.~\ref{sec: entanglement and time-orientedness} we show that this, too, can be recast as a compatibility condition with respect to time orientations. Our work opens up various directions for future research, including practical considerations of our entanglement criterion as well as its generalisation to von Neumann algebras \cite{ConnesRovelli1994} (see Sec.~\ref{sec: time orientations and the arrow of time} and Sec.~\ref{sec: discussion}).

\subsection{The PPT criterion}\label{sec: PPT criterion}

Peres noted that the operation of partial transposition transforms separable states into separable states. In turn, any bipartite quantum state\footnote{We will identify states $\sigma \in \mc{S}(\cA)$ on $\cA$ with their respective density matrices via $\sigma(a) = \tr[\rho a]$ for all $a \in \cA$.} $\rho = \sum_{ij} c_{ij} \rho_{\cA,i} \otimes \rho_{\cB,j}
\in \SAB$ with $c_{ij} \in \C$, $\rho_{\cA,i} \in \mc{S}(\cA)$, $\rho_{\cB,j} \in \mc{S}(\cB)$, whose partial transpose $\rho^{T_\cA} := \sum_{ij} c_{ij} \rho^T_{\cA,i} \otimes \rho_{\cB,j}$ has at least one negative eigenvalue, is necessarily entangled \cite{Peres1996}. Throughout, we write $\cA = \mc{L}(\cH_\cA)$ for $\mathrm{dim}(\cH_\cA)$ finite. The partial-transpose criterion is necessary and sufficient in low dimensions, $\mathrm{dim}(\cH_\cA) = 2$ and $\mathrm{dim}(\cH_\cB) = 2,3$, but is merely sufficient in higher dimensions \cite{Horodeckisz1996}. Driven mainly by practical considerations 
the result has been sharpened in various ways (see \cite{Horodeckisz2009} and references therein). This development, while rich and still active, has overshadowed the physical significance of Peres' insight. In contrast, here we will only be concerned with the conceptual importance, leaving the practical value of our work for future study.\\

\textbf{Pure and purified mixed states.} We recall the following simple fact.

\begin{proposition}\label{prop: PPT - pure bipartite case}
    The PPT criterion is necessary and sufficient for pure bipartite states. 
\end{proposition}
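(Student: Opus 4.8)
The plan is to reduce everything to the Schmidt decomposition. For a pure bipartite state $\rho = |\psi\rangle\langle\psi| \in \SAB$ I would first invoke the Schmidt decomposition $|\psi\rangle = \sum_{k=1}^{r} \sqrt{\lambda_k}\, |a_k\rangle \otimes |b_k\rangle$ with $\lambda_k > 0$, $\sum_k \lambda_k = 1$, and orthonormal families $\{|a_k\rangle\}_k \subseteq \cH_\cA$, $\{|b_k\rangle\}_k \subseteq \cH_\cB$; here $r$ denotes the Schmidt rank. The structural fact to record is that a pure state is separable (equivalently, a product state) precisely when $r = 1$. Peres' observation already supplies one direction: if $\rho$ is separable then $\rho^{T_\cA} \geq 0$. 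It therefore only remains to establish the converse for pure states, namely that $r \geq 2$ forces a strictly negative eigenvalue of $\rho^{T_\cA}$.

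Before computing, I would note that positivity of the partial transpose is independent of the basis used to define the transpose: two such partial transposes differ by conjugation with a unitary of the form $U \otimes \mathbbm{1}$ acting only on $\cH_\cA$, hence share the same spectrum. This licenses a computation in the Schmidt basis itself. Writing $\rho = \sum_{k,l} \sqrt{\lambda_k \lambda_l}\, |a_k\rangle\langle a_l| \otimes |b_k\rangle\langle b_l|$ and transposing the first tensor factor gives
\[
    \rho^{T_\cA} = \sum_{k,l} \sqrt{\lambda_k \lambda_l}\, |a_l\rangle\langle a_k| \otimes |b_k\rangle\langle b_l|.
\]

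The key step is then to read off the spectrum. The diagonal terms $k = l$ contribute the eigenvalue $\lambda_k$ on the eigenvector $|a_k\rangle \otimes |b_k\rangle$. For each unordered pair $k \neq l$, the operator leaves invariant the two-dimensional subspace spanned by $|a_k\rangle \otimes |b_l\rangle$ and $|a_l\rangle \otimes |b_k\rangle$, on which it acts as $\sqrt{\lambda_k \lambda_l}\,\bigl(\begin{smallmatrix} 0 & 1 \\ 1 & 0 \end{smallmatrix}\bigr)$, with eigenvalues $\pm\sqrt{\lambda_k \lambda_l}$. These subspaces, together with the diagonal vectors, are mutually orthogonal, and $\rho^{T_\cA}$ annihilates the remaining directions. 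Hence $\rho^{T_\cA}$ has a negative eigenvalue exactly when a pair $k \neq l$ with $\lambda_k, \lambda_l > 0$ exists, i.e. exactly when $r \geq 2$. Combining the two directions, positivity of $\rho^{T_\cA}$ is equivalent to $r = 1$, equivalently to separability, which is the claim.

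The only genuine obstacle is the spectral computation, and even that becomes routine once the $2 \times 2$ block structure is isolated; the point requiring care is the basis-dependence of the transpose, which is why I would dispatch the unitary-conjugation remark at the outset. Everything else follows from standard properties of the Schmidt decomposition.
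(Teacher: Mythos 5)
Your proof is correct and takes essentially the same approach as the paper's: Schmidt decomposition, explicit computation of $\rho^{T_\cA}$ in the Schmidt basis, and a negative eigenvalue $-\sqrt{\lambda_k\lambda_l}$ for each pair of distinct nonzero Schmidt coefficients. You simply spell out two points the paper leaves implicit---the basis-independence of the spectrum of the partial transpose and the $2\times 2$ block diagonalisation---which strengthens rather than changes the argument.
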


\begin{proof}
    Let $|\psi\rangle \in \cH_\cA \otimes \cH_\cB$ and consider the Schmidt decomposition $|\psi\rangle = \sum_i \alpha_i |ii\rangle$, $\alpha_i \in \R_+$, with density matrix $\rho_\psi = |\psi\rangle\langle\psi| = \sum_{ij} \alpha_i\alpha_j |i\rangle\langle j| \otimes |i\rangle\langle j| \in \mc{S}(\cA \otimes \cB)$. Note that $\rho_\psi$ is separable if and only if the sum in the Schmidt decomposition collapses to a single term. Applying partial transposition on system $\cA$ we obtain $\rho^{T_\cA}_\psi = \sum_{ij} \alpha_i\alpha_j |j\rangle\langle i| \otimes |i\rangle\langle j|$. It is easy to see that this operator has a negative eigenvalue for every pair of non-zero coefficients $\alpha_i,\alpha_j \neq 0$.
\end{proof}



This is of course well known. The following observation is equally straightforward: we can apply Prop.~\ref{prop: PPT - pure bipartite case} to any bipartite state by considering purifications. What is more, a version of the Schr\"odinger-HJW theorem assures independence of the choice of purification \cite{Schroedinger1936,HughstonJoszaWootters1993} (see also \cite{Kirkpatrick2003}).
The PPT criterion thus becomes necessary and sufficient with respect to purifications. Of course, for pure states there are easier ways to check whether a state is entangled or separable.\footnote{For instance, note that we have used the Schmidt rank in Prop.\ref{prop: PPT - pure bipartite case}.} Nevertheless, as we will see below the criterion works \emph{because} it works on the level of purifications. This is best expressed in terms of channels.\\ 

\textbf{Transposition vs Hermitian adjoint.} 
Note that transposition reverses the order of matrix multiplication: let $a \in M_{k\times l}(\C)$, $b \in M_{l\times m}(\C)$, then
\begin{equation*}\label{eq: transposition reverses matrix composition}
    ((ab)_{ki})^T = (ab)_{ik}
    = \sum_{j=1}^l a_{ij}b_{jk}
    = \sum_{j=1}^l (b_{kj})^T(a_{ji})^T
    = (b^Ta^T)_{ki}\; .
\end{equation*}
This fact is somewhat left implicit from the perspective of bipartite states $\rho \in \SAB$. In order to make it explicit, we identify a bipartite state $\rho$ with its quantum channel $\phi_\rho: \cA \ra \cB$ under the Choi-Jamio\l kowski isomorphim \cite{Jamiolkowski1972,Choi1975} (see also, \cite{Frembs2022b}): recall that every quantum channel $\phi: \cA \rightarrow \cB$, i.e., every completely positive 
linear map, determines a bipartite state $\rho_\phi$ (up to normalisation) by
\begin{equation}\label{eq: reverse CJ-isomorphism}
    \rho_\phi = \sum_{ij} E_{ij} \otimes \phi(E_{ij})\; ,\footnote{Sometimes this is written as $\rho_\phi = (\mathrm{id} \otimes \phi)(|\Phi\rangle\langle\Phi|)$, where $|\Phi\rangle = \sum_i |i\rangle \otimes |i\rangle$ is a maximally mixed state.}
\end{equation}
where $E_{ij}$ is the matrix with $1$ in the entry $(i,j)$ and $0$ elsewhere.

Conversely, every bipartite state $\rho \in \SAB$ corresponds to the quantum channel
\begin{equation}\label{eq: CJ-isomorphism}
    \phi_\rho(a) = \tr_{\cH_\cA} [\rho(a^T \otimes 1_\cB)]\; .\footnote{There are different versions of this isomorphism; for a detailed discussion, see \cite{Frembs2022b}.} 
\end{equation}

Clearly, with respect to the choice of basis in Eq.~(\ref{eq: CJ-isomorphism}) we have $(E_{ij})^T = E_{ij}^*$. This allows us to replace transposition with the (Hermitian) adjoint.

\begin{lemma}\label{lm: transposition vs *}
    Let $\rho \in \SAB$, let $\phi_\rho$ be the map under the linear isomorphism in Eq.~(\ref{eq: CJ-isomorphism}), and let $(\Phi_\rho,v,\cK)$ be a Stinespring dilation of $\phi_\rho$, i.e., $\phi_\rho = v^*\Phi_\rho v$ with $v: \cH_\cB \ra \cK$ linear and $\Phi_\rho: \cA \ra \cB(\cK)$ a $C^*$-homomorphism \cite{Stinespring1955}. Then $\phi_{\rho^{T_\cA}} = \phi^*_\rho = v^*\Phi^*_\rho v$.\footnote{Notably, this is different than so-called co-positive maps, i.e., maps $\phi^T := T_\cB \circ \phi$, where $\phi: \cA \ra \cB$ is a completely positive map: by a similar computation we obtain $\phi_{\rho^{T_\cA}} = \phi^T_\rho$, yet, $\phi^T_\rho \neq v^*\Phi^T_\rho v$ in general.}\footnote{We remark that $\phi^* := * \circ \phi$ denotes the adjoint of the image of the channel $\phi$, not its (Heisenberg) dual.}
\end{lemma}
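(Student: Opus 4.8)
The plan is to establish the stated identity in two independent steps, matching its two equalities: first the channel-level reduction of the partial transpose to the Hermitian adjoint, $\phi_{\rho^{T_\cA}} = \phi^*_\rho$, and then the passage of the adjoint through the Stinespring dilation, $\phi^*_\rho = v^*\Phi^*_\rho v$. Throughout I would use that (\ref{eq: CJ-isomorphism}) inverts (\ref{eq: reverse CJ-isomorphism}), so that $\rho = \sum_{ij} E_{ij} \otimes \phi_\rho(E_{ij})$ and, crucially, the bipartite operator attached to a channel is pinned down via (\ref{eq: reverse CJ-isomorphism}) by that channel's values on the matrix units $E_{ij}$.

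First I would read off the partial transpose from $\rho = \sum_{ij} E_{ij} \otimes \phi_\rho(E_{ij})$: transposing the $\cA$-leg and relabelling indices gives $\rho^{T_\cA} = \sum_{ij} E_{ij} \otimes \phi_\rho(E_{ij}^T)$, whence the channel attached to $\rho^{T_\cA}$ satisfies $\phi_{\rho^{T_\cA}}(E_{ij}) = \phi_\rho(E_{ij}^T)$. Now I would invoke two facts. The first is the basis identity $E_{ij}^T = E_{ij}^*$ noted just above the lemma. The second is that $\phi_\rho$ preserves adjoints, $\phi_\rho(a^*) = (\phi_\rho(a))^*$; I would derive this from (\ref{eq: CJ-isomorphism}) using that the partial trace is cyclic in the $\cA$-factor, that transposition and $\tr_{\cH_\cA}$ commute with $*$, and that $\rho = \rho^*$ because states are self-adjoint. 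Combining the two, $\phi_{\rho^{T_\cA}}(E_{ij}) = \phi_\rho(E_{ij}^*) = (\phi_\rho(E_{ij}))^* = \phi^*_\rho(E_{ij})$, and since the matrix units fix the channel this gives $\phi_{\rho^{T_\cA}} = \phi^*_\rho$.

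For the second equality I would simply thread the adjoint through the dilation $\phi_\rho = v^*\Phi_\rho v$. Since $(\cdot)^*$ reverses products and $(v^*)^* = v$, one has $\phi^*_\rho(a) = (v^*\Phi_\rho(a) v)^* = v^*(\Phi_\rho(a))^* v = v^*\Phi^*_\rho(a) v$, which is the claim. This step also makes transparent why the adjoint, and not transposition, is the right operation here: $\Phi_\rho$ is a $C^*$-homomorphism, so $(\cdot)^*$ is compatible with it in the sense that $(\Phi_\rho(a))^* = \Phi_\rho(a^*)$ again lands in the image of $\Phi_\rho$, whereas transposing a homomorphism need not respect the $*$-structure carried by $\Phi_\rho$ and so fails to factor through the sandwich by $v$ --- precisely the obstruction recorded in the footnote on co-positive maps.

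The step I expect to be the real obstacle is conceptual rather than computational: reconciling the $\C$-linear operation of partial transposition (hence $\phi_{\rho^{T_\cA}}$) with the antilinear Hermitian adjoint $* \circ \phi_\rho$. As maps on all of $\cA$ these cannot be literally equal --- one is linear, the other antilinear --- and they agree only on the real matrices, of which the matrix units $E_{ij}$ are representatives. The resolution, and the point I would be most careful to phrase correctly, is that the Choi--Jamio\l kowski correspondence is determined entirely by the values on the $E_{ij}$, so evaluating there is exactly what the identity means; self-adjointness of $\rho$ then enters at a single, clearly isolated place, namely the adjoint-preservation $\phi_\rho(a^*) = (\phi_\rho(a))^*$.
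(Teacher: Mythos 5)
Your proof is correct and takes essentially the same route as the paper's: both arguments work at the level of the Choi-Jamio\l kowski data on the matrix units $E_{ij}$, both invoke $\rho = \rho^*$ (you package it as Hermiticity-preservation $\phi_\rho(a^*) = \phi_\rho(a)^*$, the paper as $\rho^{T_\cA} = (\rho^*)^{T_\cA}$), and both pass the adjoint through the dilation via $(v^*\Phi_\rho(a)v)^* = v^*\Phi^*_\rho(a)v$. Your closing observation that the stated equality of a linear map with the antilinear $* \circ \phi_\rho$ can only mean agreement on the matrix units (equivalently, equality of the associated bipartite operators) is a subtlety the paper leaves implicit, and you resolve it exactly as the paper's proof implicitly does.
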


\begin{proof}
    We have
    \begin{align*}
        \sum_{ij} E_{ij} \otimes \phi_{\rho^{T_\cA}}(E_{ij})
        = \rho^{T_\cA}
        = (\rho^*)^{T_\cA}
        = \sum_{ij} E_{ij} \otimes \phi^*_\rho(E_{ij})
        = \sum_{ij} E_{ij} \otimes v^*\Phi^*_\rho(E_{ij})v\; ,
    \end{align*}
    where we used Eq.~(\ref{eq: reverse CJ-isomorphism}) in the first and third step and $\rho^* = \rho$ in the second.
\end{proof}

Partial transposition therefore assumes a more natural interpretation in terms of the adjoint operation on the local system $\cB$ (see also \cite{Frembs2022b}). In Sec.~\ref{sec: time orientations and the arrow of time} we will see that this encodes a difference between 
time orientations on the system $\cB$.
The latter also play a crucial role in selecting quantum from more general non-signalling bipartite correlations \cite{FrembsDoering2022a}.
\subsection{Quantum from non-signalling correlations}\label{sec: quantum from non-signalling correlations}

It is instructive to view the problem of entanglement classification from the broader perspective of classifying quantum from non-signalling correlations. In general, non-signalling distributions are far from being quantum \cite{PopescuRohrlich1994}. Considering product quantum observables, a Gleason-type argument restricts non-signalling bipartite correlations to normalised linear functionals that are positive on pure tensors (POPT), yet not necessarily positive \cite{KlayRandallFoulis1987,Wallach2002,BarnumEtAl2010,FrembsDoering2022b}. To further single out quantum correlations among the latter, various additional physical principles have been proposed, see e.g. \cite{Popescu2014}. While successful in some instances, none has been shown to recover the quantum state space in general \cite{NavascuesGuryanovaHobanAcin2015}.

Recently, a classification of quantum states from more general non-signalling bipartite correlations has been obtained: quantum states correspond with those correlations, which satisfy (i) an extension of the no-signalling principle to dilations, and (ii) a relative consistency condition between the canonical unitary evolution in the respective subsystems (see Def.~1, Thm.~2 and Def.~2, Thm.~3 in \cite{FrembsDoering2022a}). Correlations under (i) (but not necessarily (ii)) correspond with decomposable maps under the Choi-Jamio\l kowski isomorphism in Eq.~(\ref{eq: CJ-isomorphism}).\footnote{We remark that for $\mathrm{dim}(\cH_\cA) = 2$ and $\mathrm{dim}(\cH_\cB) = 2,3$ every positive map $\phi: \cA \ra \cB$ is decomposable \cite{Stormer1963,Woronowicz1976}, which implies necessity of the PPT criterion in those 
dimensions \cite{Horodeckisz1996} (see also Thm.~\ref{thm: entanglement and time} below).}

Recall that a linear map $\phi: \cA \rightarrow \cB(\mc{H})$ is decomposable if there exists a Hilbert space $\mc{K}$, a bounded linear operator $v: \cH \rightarrow \mc{K}$, and a Jordan $*$-homomorphism $\Phi$, i.e., $\Phi(aa'+a'a) = \Phi(a)\Phi(a')+\Phi(a')\Phi(a)$ and $\Phi^*(a) = \Phi(a^*)$ for all $a,a' \in \cA$ (for details, see Sec.~\ref{sec: entanglement and time-orientedness}), such that $\phi = v^* \Phi v$. Such maps are more general than quantum channels $\phi: \cA \ra \cB(\cH)$, which are of similar form: $\phi = v^*\Phi v$ with $\Phi$ a $C^*$-homomorphism. By Stinespring's theorem \cite{Stinespring1955}, the latter is equivalent to $\phi$ being completely positive: if $x_{ij} \in M_n(\cA)_+ = (M_n(\C) \otimes \cA)_+$, then $\phi(x_{ij}) := \mathrm{id}_{M_n(\C)} \otimes \phi(x_{ij}) \in M_n(\cB(\mc{H}))_+$. Similarly, decomposable maps can be characterised by a weaker positivity condition \cite{Stormer1982}: if $x_{ij} \in M_n(\cA)_+$ \emph{and} $x_{ji} \in M_n(\cA)_+$, then $\phi(x_{ij}) \in M_n(\cB(\mc{H}))_+$. Let $\mc{S}_D(\cA \otimes \cB)$ denote the class of bipartite states corresponding to decomposable maps under the Choi-Jamio\l kowski isomorphism. Interestingly, (the weaker positivity condition in) $\mc{S}_D(\cA \otimes \cB)$ is preserved under partial transposition (see also \cite{Frembs2022b}).

\begin{proposition}\label{prop: PPT trivial on decomposable states}
    Let $\rho \in \mc{S}_D(\cA \otimes \cB)$, i.e., $\rho$ corresponds to a decomposable map under the Choi-Jamio\l kowski isomorphism in Eq.~(\ref{eq: reverse CJ-isomorphism}). Then $\rho^{T_\cA} \in \mc{S}_D(\cA \otimes \cB)$.
\end{proposition}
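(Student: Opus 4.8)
The plan is to transport the statement to the level of channels via the Choi-Jamio\l kowski isomorphism and to exploit the dilation characterisation of decomposable maps. Since $\rho \in \mc{S}_D(\cA \otimes \cB)$, its image $\phi_\rho$ under Eq.~(\ref{eq: CJ-isomorphism}) is decomposable, so there exist a Hilbert space $\cK$, a bounded linear map $v: \cH_\cB \ra \cK$, and a Jordan $*$-homomorphism $\Phi: \cA \ra \cB(\cK)$ with $\phi_\rho = v^*\Phi v$. First I would observe that the computation in the proof of Lemma~\ref{lm: transposition vs *} uses only the reverse isomorphism Eq.~(\ref{eq: reverse CJ-isomorphism}) together with the Hermiticity $\rho^* = \rho$, and not the fact that the dilation is a $C^*$-homomorphism. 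Hence it carries over verbatim to the decomposable setting and yields $\phi_{\rho^{T_\cA}} = \phi_\rho \circ T_\cA$, where $T_\cA: \cA \ra \cA$ denotes transposition on the first factor (this is the map $\phi^*_\rho$ in the notation of Lemma~\ref{lm: transposition vs *}).

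The second step is to identify $T_\cA$ as a Jordan $*$-automorphism of $\cA = \mc{L}(\cH_\cA)$. Linearity and the relation $T_\cA(a^*) = T_\cA(a)^* = \bar a$ are immediate on matrix units, and the Jordan property follows from $(aa' + a'a)^T = (a')^Ta^T + a^T(a')^T = T_\cA(a)T_\cA(a') + T_\cA(a')T_\cA(a)$, where the reversal of the order of multiplication is absorbed by the symmetry of the Jordan product. Consequently $\Phi \circ T_\cA$ is a composition of Jordan $*$-homomorphisms, hence itself a Jordan $*$-homomorphism, so that $\phi_{\rho^{T_\cA}} = v^*(\Phi \circ T_\cA) v$ exhibits $\phi_{\rho^{T_\cA}}$ as decomposable. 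Applying Eq.~(\ref{eq: reverse CJ-isomorphism}) once more returns $\rho^{T_\cA} \in \mc{S}_D(\cA \otimes \cB)$, as claimed.

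The conceptual heart of the argument---and the point that most deserves care---is that transposition generically trades an honest (anti-)homomorphism for its opposite: if $\Phi$ were a $C^*$-homomorphism, then $\Phi \circ T_\cA$ would be only a $*$-anti-homomorphism, which is precisely why the analogous statement fails for completely positive (rather than decomposable) maps. It is exactly the Jordan structure, being blind to the order of multiplication, that is stable under partial transposition. I therefore expect the main obstacle to be conceptual rather than computational, namely confirming that the dilation characterisation of decomposability (in the sense of St\o rmer) is genuinely preserved under composition with $T_\cA$; this reduces to the closure of Jordan $*$-homomorphisms under the Jordan $*$-automorphism $T_\cA$, which the second step establishes. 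One should resist the temptation to read off the claim directly from the symmetric positivity condition recalled above, since the exchange $x_{ij} \leftrightarrow x_{ji}$ there acts on the ancillary index and interacts nontrivially with the entrywise transpose induced by $T_\cA$; the dilation route sidesteps this subtlety entirely.
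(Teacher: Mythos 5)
Your proof is correct and takes essentially the same route as the paper: pass to channels via the Choi--Jamio\l kowski isomorphism, reuse the Lemma~\ref{lm: transposition vs *} computation to identify $\phi_{\rho^{T_\cA}}$ with a ``twisted'' $\phi_\rho$, and then note that the twisted dilation map is still a Jordan $*$-homomorphism because the symmetric Jordan product is insensitive to the reversal of operator multiplication. The only cosmetic difference is that you compose $\Phi$ with the (linear) transpose $T_\cA$ whereas the paper composes with the (antilinear) adjoint $*$ to form $\Phi^*_\rho$; since $\Phi$ preserves adjoints these agree on matrix units, and your linear variant arguably handles the linearity bookkeeping more cleanly.
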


\begin{proof}
    By a similar argument to the one in Lm.~\ref{lm: transposition vs *}, $\phi_{\rho^{T_\cA}} = \phi^*_\rho = v^*\Phi^*_\rho v$, where $\Phi_\rho$ is a Jordan $*$-homomorphism, hence, $\Phi^*_\rho := * \circ \Phi_\rho = \Phi_\rho \circ *$. But then so is $\Phi^*_\rho$: for all $a_1,a_2 \in \cA$,
    \begin{equation*}
        \Phi^*_\rho(\{a_1,a_2\})
        = \Phi_\rho(\{a_1,a_2\}^*)
        = \Phi_\rho(\{a_1^*,a_2^*\})
        = \{\Phi_\rho(a_1^*),\Phi_\rho(a_2^*)\}
        = \{\Phi^*_\rho(a_1),\Phi^*_\rho(a_2)\}\; .
    \end{equation*}
    Consequently, $\phi_{\rho^{T_\cA}}$ is decomposable and $\rho^{T_\cA} \in S_D(\cA\otimes\cB)$.
\end{proof}


Summarising this motivational prologue, in Sec.~\ref{sec: PPT criterion} we remarked that the PPT criterion becomes necessary and sufficient when applied to \emph{purifications}, and used the Choi-Jamio\l kowski isomorphism to translate the criterion \emph{from bipartite states to bipartite channels}. In particular, we recast partial transposition into the (Hermitian) adjoint in Lm.~\ref{lm: transposition vs *}, which further suggests to lift the PPT criterion from the level of bipartite quantum channels $\phi$ to $C^*$-homomorphisms $\Phi$ in Stinespring dilations $\phi = v^*\Phi v$.

More generally, in Sec.~\ref{sec: quantum from non-signalling correlations} we considered the PPT criterion with respect to dilations of decomposable maps. Prop.~\ref{prop: PPT trivial on decomposable states} shows that the partial transpose preserves the respective positivity condition of linear functionals corresponding to such maps. \emph{We deduce that the PPT criterion is sensitive precisely to the difference between decomposable and completely positive maps.} In the context of classifying quantum from non-signalling bipartite correlations, this is achieved by enforcing a compatibility condition with respect to the relative time orientation between systems $\cA$ and $\cB$ 
\cite{FrembsDoering2022a}. Building on this motivation, in the next section we identify a necessary and sufficient condition for bipartite separability in terms of a compatibility condition with respect to \emph{different time orientations} between $\cA$ and $\cB$.

\section{Entanglement and the arrow of time}\label{sec: dilated PPT criterion}

In this main part, we combine the insights gained in previous sections.
In Sec.~\ref{sec: Generalised PPT criterion}, we identify a necessary and sufficient criterion for bipartite entanglement (Thm.~\ref{thm: dilated PPT criterion}).
In Sec.~\ref{sec: entanglement and time-orientedness} we employ the structure theory of Jordan and (associative) $C^*$-algebras to translate this criterion into a compatibility condition between canonical time orientations on local subsystems (Thm.~\ref{thm: entanglement and time}). Finally, in Sec.~\ref{sec: time orientations and the arrow of time} we interpret our results in light of the intrinsic flow of time in von Neumann algebras by means of Tomita-Takesaki theory, Connes cocyles, and the background-independent thermodynamical arrow of time \cite{ConnesRovelli1994}.

\subsection{A necessary and sufficient criterion for bipartite entanglement}\label{sec: Generalised PPT criterion}

The PPT criterion translates between bipartite states and bipartite channels as follows:
\begin{align*}\label{eq: PPT under JC}
    \rho^{T_\cA} \mathrm{\ positive} &\stackrel{\mathrm{\ \ \ Choi's\ theorem}\ \ \ \ }{\Longleftrightarrow} & &\phi_{\rho^{T_\cA}} \mathrm{\ completely\ positive} \\
    &\stackrel{\ \ \ \ \ \ \ \ \mathrm{Lm.~}\ref{lm: transposition vs *}\ \ \ \ \ \ \ \ }{\Longleftrightarrow} & &\phi^*_\rho \mathrm{\ completely\ positive} \\
    &\stackrel{\mathrm{\ Stinespring's\ theorem}}{\Longleftrightarrow} & &\phi^*_\rho = (v')^*\Phi'_\rho v', \mathrm{\ where\ } (v',\Phi'_\rho,\cK') \mathrm{\ is\ a\ Stinespring\ dilation}
\end{align*}
Now since $\phi_\rho$ is completely positive, it also has a Stinespring dilation $\phi_\rho = v^*\Phi_\rho v$.
\emph{We may thus strengthen the PPT criterion as follows: rather than $\phi_\rho^*$ admitting any Stinespring dilation, we ask when $v' = v$, $\Phi'_\rho = \Phi^*_\rho$, i.e., when $\Phi^*_\rho$ is a $C^*$-homomorphism.} 

Note first that this condition does not depend on the choice of Stinespring dilation.

\begin{lemma}\label{lm: Schroedinger-HJW for maps}
    Let $\phi = v_1\Phi_1 v_1^* = v_2\Phi_2 v_2^*$ be two Stinespring dilations of $\phi$. Then $\Phi_1^*: \cA \ra \cB(\cK_1)$ is a $C^*$-homomorphism if and only if $\Phi_2^*: \cA \ra \cB(\cK_2)$ is a $C^*$-homomorphism.
\end{lemma}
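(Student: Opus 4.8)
The plan is to exploit the essential uniqueness of the Stinespring dilation and show that the property ``$\Phi^*$ is a $C^*$-homomorphism'' transfers across the canonical intertwiner relating any two dilations. Recall that Stinespring's theorem guarantees a \emph{minimal} dilation, unique up to unitary equivalence: if $(v_1,\Phi_1,\cK_1)$ and $(v_2,\Phi_2,\cK_2)$ are both \emph{minimal} (i.e. $\cK_i = \overline{\mathrm{span}}\,\Phi_i(\cA)v_i\cH_\cB$), then there is a unitary $u: \cK_1 \ra \cK_2$ with $uv_1 = v_2$ and $u\Phi_1(a)u^* = \Phi_2(a)$ for all $a \in \cA$. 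First I would reduce the general statement to the minimal case: an arbitrary dilation decomposes as a minimal one plus a part on which $\Phi_i$ acts but which is orthogonal to the cyclic subspace, and I would check that appending or discarding such an orthogonal summand does not affect whether $\Phi_i^*$ is multiplicative (the homomorphism property is tested on the cyclic subspace, and extends compatibly to the complement). This lets me assume both dilations are minimal.

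Next, with $u$ the intertwining unitary, the key computation is that $\Phi_2^*$ and $\Phi_1^*$ are related by conjugation with $u$ \emph{up to the adjoint operation}. Concretely, since $\Phi_2(a) = u\Phi_1(a)u^*$ and $\Phi_i^* = * \circ \Phi_i$, I would verify
\begin{equation*}
    \Phi_2^*(a) = \Phi_2(a)^* = \bigl(u\Phi_1(a)u^*\bigr)^* = u\Phi_1(a)^*u^* = u\,\Phi_1^*(a)\,u^*\; .
\end{equation*}
Thus $\Phi_2^* = \Ad_u \circ \Phi_1^*$, i.e. $\Phi_1^*$ and $\Phi_2^*$ differ by a unitary conjugation, which is a $C^*$-isomorphism of $\cB(\cK_1) \ra \cB(\cK_2)$. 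Since multiplicativity is preserved under composition with a $*$-isomorphism, $\Phi_1^*$ is a $C^*$-homomorphism if and only if $\Phi_2^* = \Ad_u \circ \Phi_1^*$ is one. This gives the equivalence directly.

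I expect the main obstacle to be the reduction to minimal dilations in the first step, since the statement as written allows arbitrary (non-minimal) Stinespring dilations. The subtlety is that $\Phi_i^*$ being a homomorphism is a condition on \emph{all} of $\cB(\cK_i)$, and a non-minimal dilation may carry a redundant summand on which $\Phi_i$ restricts to some representation whose adjoint-twist could behave differently. The clean way around this is to note that $\Phi_i(a)$ leaves the cyclic subspace $\cK_i^0 = \overline{\mathrm{span}}\,\Phi_i(\cA)v_i\cH_\cB$ invariant, so $\Phi_i$ block-decomposes, and $\phi = v_i^*\Phi_i v_i$ depends only on the block on $\cK_i^0$; I would argue that $\Phi_i^*$ is a homomorphism iff its restriction to the relevant block is, thereby isolating the minimal part and completing the reduction. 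A cleaner alternative, avoiding minimality altogether, is to observe that the statement can be rephrased intrinsically: $\Phi^*$ is a $C^*$-homomorphism iff $\Phi$ is a $C^*$-\emph{anti}homomorphism, and since the anti-homomorphism property of the Jordan/$*$-homomorphism $\Phi$ in the dilation is itself dilation-independent (being equivalent, via $\phi = v^*\Phi v$, to a positivity-type condition on $\phi$ that I can read off from the Choi matrix $\rho_\phi$), the result follows without ever passing to a specific intertwiner.
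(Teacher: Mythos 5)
Your minimal-case argument is correct and coincides with the paper's own proof: the paper likewise builds an intertwiner between the two dilations (a partial isometry $W: \cK_1 \ra \cK_2$ with $W\Phi_1(a)v_1|\psi\rangle = \Phi_2(a)v_2|\psi\rangle$, unitary when both dilations are minimal), deduces $\Phi_1 = W^*\Phi_2 W$, hence $\Phi_1^* = W^*\Phi_2^* W$, and transfers multiplicativity by conjugation --- exactly your $\Phi_2^* = \Ad_u \circ \Phi_1^*$ step.

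The genuine gap is the step you flagged yourself: the reduction from arbitrary to minimal dilations. It cannot be repaired, because the property being transferred is \emph{not} stable under adding a redundant summand. By Lm.~\ref{lm: C* and anti-C* homo}, $\Phi^*$ being a $C^*$-homomorphism forces $\Phi(\cA)$ to be commutative, and commutativity sees the \emph{whole} image, not just its action on the cyclic subspace: writing $\Phi = \Phi_{\mathrm{min}} \oplus \Phi_{\mathrm{rest}}$ relative to $\cK = \cK^0 \oplus (\cK^0)^\perp$, \emph{both} blocks must have commutative image. Now take any dilation $\phi = v_0^*\Phi_{\mathrm{min}}v_0$ and enlarge it to $\cK = \cK^0 \oplus \cH_\cA$, $\Phi = \Phi_{\mathrm{min}} \oplus \mathrm{id}_\cA$, $v = v_0 \oplus 0$. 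Then $v^*\Phi v = \phi$, so this is a Stinespring dilation of the same map in the paper's sense, but $\Phi(\cA)$ contains a copy of $\cA = M_n(\C)$ and is never commutative for $n \geq 2$, so $\Phi^*$ is never multiplicative. In particular, if the lemma held for arbitrary dilations, the criterion of Thm.~\ref{thm: dilated PPT criterion} could never be satisfied by any nonzero state, contradicting that theorem for separable states. So your claim that ``$\Phi_i^*$ is a homomorphism iff its restriction to the relevant block is'' fails in one direction, and your fallback ``intrinsic rephrasing'' fails for the same reason: a property that genuinely differs between two dilations of the same $\phi$ cannot be equivalent to any condition on $\phi$ or its Choi matrix alone. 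The statement is correct only when ``Stinespring dilation'' is read as \emph{minimal} Stinespring dilation; note that the paper's own proof makes the same silent assumption, since $\Phi_1 = W^*\Phi_2 W$ is false off the cyclic subspace (the right-hand side vanishes on $(\cK_1^0)^\perp$, whereas $\Phi_1(1) = 1$ there for a unital dilation). With minimality added as a hypothesis, your first paragraph alone is a complete proof.
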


\begin{proof}
    There is a partial isometry $W: \cK_1 \ra \cK_2$ defined by $W \Phi_1 v_1 |\psi\rangle = \Phi_2 v_2 |\psi\rangle$ for all $|\psi\rangle \in \cH_\cB$ such that $\Phi_1 = W^* \Phi_2 W$.\footnote{Note that for minimal Stinespring dilations $W$ is unitary.} Hence, $\Phi^*_1 = W^* \Phi^*_2 W$ and the claim follows.
\end{proof}

Next, we have the following important characterisation.

\begin{lemma}\label{lm: C* and anti-C* homo}
    Let $\Phi: \cA \ra \cB(\cK)$ be a $C^*$-homomorphism. 
    Then $\Phi^*: \cA \ra \cB(\cK)$ 
    is a $C^*$-homomorphism if and only if $\Phi(\cA) \subset \cB(\cK)$ is a commutative subalgebra,\footnote{It is interesting to note that Bell's theorem holds for states over $C^*$-algebras as long as one of them is commutative \cite{Baez1987}. We discuss the relation with Bell's theorem and Bell nonlocality elsewhere \cite{DoeringFrembs2019a}.} equivalently, $\Phi = \Phi|_V$ for $V \subset \cA$ a commutative subalgebra.
\end{lemma}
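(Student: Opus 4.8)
The plan is to exploit the single structural fact underlying the whole discussion, namely that the adjoint reverses the order of multiplication, so that $\Phi^*$ is automatically a $*$-\emph{anti}-homomorphism; requiring it to be a genuine $C^*$-homomorphism is then an extra constraint that forces commutativity of the image. Since $\Phi$ is multiplicative and $(ab)^* = b^*a^*$, one computes directly
\begin{equation*}
    \Phi^*(ab) = \Phi\big((ab)^*\big) = \Phi(b^*a^*) = \Phi(b^*)\Phi(a^*) = \Phi^*(b)\,\Phi^*(a)
\end{equation*}
for all $a,b \in \cA$, so that $\Phi^*$ is order-reversing, while the same check as in Prop.~\ref{prop: PPT trivial on decomposable states} shows that $\Phi^*$ is $*$-preserving. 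I would also record at the outset that $\Phi^*(\cA) = \Phi(\cA)$, since $a \mapsto a^*$ is a bijection of $\cA$.

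For the equivalence I would argue both directions from this identity. If $\Phi(\cA)$ is commutative then $\Phi^*(a)\Phi^*(b) = \Phi^*(b)\Phi^*(a)$ for all $a,b$, and combining this with the anti-multiplicativity above yields $\Phi^*(ab) = \Phi^*(a)\Phi^*(b)$; together with $*$-preservation this is precisely the multiplicativity required of a $C^*$-homomorphism. Conversely, if $\Phi^*$ is a $C^*$-homomorphism then $\Phi^*(ab) = \Phi^*(a)\Phi^*(b)$, which compared with $\Phi^*(ab) = \Phi^*(b)\Phi^*(a)$ forces $\Phi^*(a)\Phi^*(b) = \Phi^*(b)\Phi^*(a)$ for all $a,b$; since $\Phi^*(\cA) = \Phi(\cA)$, the image $\Phi(\cA)$ is commutative. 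This core equivalence is thus a one-line computation once the anti-homomorphism property is isolated.

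The remaining, slightly less routine, task is the reformulation ``$\Phi = \Phi|_V$ for $V \subset \cA$ a commutative subalgebra''. One direction is immediate: if $\Phi(\cA) = \Phi(V)$ with $V$ commutative, then $\Phi(\cA)$ is a $*$-homomorphic image of a commutative algebra, hence commutative. For the converse I would invoke the structure of the finite-dimensional algebra $\cA = \cL(\cH_\cA)$: being simple, a $C^*$-homomorphism on each matrix block is either $0$ or faithful, and a faithful homomorphism has image isomorphic to the block, which is commutative only in the one-dimensional case. Hence commutativity of $\Phi(\cA)$ forces $\Phi$ to annihilate every non-abelian summand, so its image is already attained on a maximal abelian subalgebra $V \subset \cA$; restricting to such a $V$ gives $\Phi(V) = \Phi(\cA)$, i.e.\ $\Phi = \Phi|_V$.

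The main obstacle I anticipate is not the central equivalence — which is essentially forced by the order-reversal already noted in the paper — but making the geometric reformulation precise: one must fix what ``$\Phi = \Phi|_V$'' means (equality of images after restriction) and justify the existence of a commutative subalgebra realizing the full image, which is exactly where the simplicity of the matrix blocks of $\cA$ enters. A secondary point to treat with care is the linear-versus-antilinear character of $\Phi^*$; since only its multiplicative and $*$-structure are used, I would phrase the whole argument through the anti-multiplicativity identity above, thereby sidestepping that subtlety.
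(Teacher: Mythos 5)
Your proof is correct, and it splits naturally into two parts that should be compared separately. The central equivalence (multiplicativity of $\Phi^*$ versus commutativity of $\Phi(\cA)$) is the same computation as in the paper: the paper also runs $\Phi(a_1)\Phi(a_2) = \Phi^*((a_1a_2)^*) = \Phi^*(a_2^*)\Phi^*(a_1^*) = \Phi(a_2)\Phi(a_1)$ in both directions; your packaging of this as ``$\Phi^*$ is automatically a $*$-anti-homomorphism, so being a homomorphism forces commutativity'' is the identical argument, just stated more crisply, and your handling of the antilinearity issue matches the paper's implicit convention $\Phi^* = * \circ \Phi = \Phi \circ *$. The genuine divergence is in the clause $\Phi = \Phi|_V$. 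The paper takes $V$ maximal commutative (WLOG the diagonal subalgebra), notes that elements of $V^\perp$ are traceless, writes each as a commutator $[b,c]$ via Shoda's theorem \cite{Shoda1936}, and kills it by $\Phi([b,c]) = [\Phi(b),\Phi(c)] = 0$. You instead use that $\cA = \cL(\cH_\cA)$ is simple: $\ker\Phi$ is a two-sided ideal, so $\Phi$ is zero or faithful, and faithfulness is incompatible with a commutative image once $\dim\cH_\cA \geq 2$. Both routes are sound and both secure either reading of ``$\Phi = \Phi|_V$'' (equality of images under restriction, or vanishing of $\Phi$ on $V^\perp$). What your route buys is that it is more elementary (no commutator theorem needed) and it makes explicit something the paper's proof leaves hidden: under the paper's standing assumption $\cA = \cL(\cH_\cA)$ with $\dim\cH_\cA \geq 2$, a globally defined $C^*$-homomorphism with commutative image must be identically zero, so the lemma's hypotheses are far more rigid than the statement suggests. (The paper's Shoda argument yields the same if pushed one step further: it annihilates all traceless elements, and multiplicativity then forces $\Phi(1) = 0$.) This is worth recording, since it bears on how the lemma can be invoked in Thm.~\ref{thm: dilated PPT criterion}, where nonzero maps with commutative image appear; your argument makes visible that such maps can only be multiplicative on a subalgebra rather than on all of $\cA$. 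One small presentational caution: your talk of ``non-abelian summands'' implicitly addresses general finite-dimensional algebras $\bigoplus_i M_{n_i}(\C)$, which is fine but more than needed; for the paper's single-block $\cA$ you should state the conclusion plainly, namely $\Phi = 0$ unless $\dim\cH_\cA = 1$, after which $\Phi(V) = \Phi(\cA)$ holds for any commutative $V$.
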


\begin{proof}
    If $\Phi^*$ is a $C^*$-homomorphism, then for all $a_1,a_2 \in \cA$
    \begin{equation*}
        \Phi(a_1)\Phi(a_2) = \Phi(a_1a_2) = \Phi^*((a_1a_2)^*) = \Phi^*(a_2^*a_1^*) = \Phi^*(a_2^*)\Phi^*(a_1^*) = \Phi(a_2)\Phi(a_1)\; ,
    \end{equation*}
    hence, $\Phi(\cA) \subset \cB(\cK)$ is a commutative subalgebra. Conversely, if $\Phi(\cA) \subset \cB(\cK)$ is a commutative subalgebra, then for all $a_1^*,a_2^* \in \cA$
    \begin{equation*}
        \Phi^*(a_2^*a_1^*) =
        \Phi^*((a_1a_2)^*) = 
        \Phi(a_1a_2) =
        \Phi(a_1)\Phi(a_2) =
        \Phi(a_2)\Phi(a_1) =
        \Phi^*(a_2^*)\Phi^*(a_1^*)\; .
    \end{equation*}
    Hence, $\Phi^*$ is a $C^*$-homomorphism.
    
    Since $\Phi(\cA)\subset \cB(\cK)$ is a commutative subalgebra, there exists a maximal commutative subalgebra $V \subset \cA$ such that $\Phi(V) = \Phi(\cA)$. 
    We show that $\Phi(a) = 0$ for all $a \perp V$ (with respect to the Hilbert-Schmidt inner product $(a,a') := \tr[a'^* a]$). Without loss of generality, we may assume that $V$ is the commutative subalgebra generated by diagonal matrices. We want to show that $\Phi(a) = 0$ for all $a \in V^\perp$. The latter implies $\mathrm{tr}[a] = 0$, hence, $a = [b,c]$ for some $b,c \in \cA$ \cite{Shoda1936}. 
    We have $\Phi(a) = \Phi([b,c]) = [\Phi(b),\Phi(c)] = 0$, since $\Phi$ is a homomorphism and $\Phi(\cA)$ is commutative. Consequently, $\Phi$ acts non-trivially only on $V$ and is zero otherwise.
\end{proof}

The following key result shows that this property is equivalent to separability.

\begin{theorem}\label{thm: dilated PPT criterion}
     Let $\rho \in \SAB$, let $\phi_\rho$ be the map under the isomorphism in Eq.~(\ref{eq: CJ-isomorphism}) and let $\phi_\rho = v^*\Phi_\rho v$ be a 
     Stinespring dilation of $\phi_\rho$. Then $\rho$ is separable if and only if $\phi^*_\rho = v^*\Phi^*_\rho v$ is a Stinespring dilation of $\phi_\rho^*$, i.e., if and only if $\Phi^*_\rho$ 
     is a $C^*$-homomorphism.
\end{theorem}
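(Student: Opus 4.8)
The plan is to reduce the theorem, by means of the two preceding lemmas, to a purely algebraic condition on the image of the Stinespring homomorphism, and then to match that condition with the existence of a separable decomposition of $\rho$ through the Choi--Jamio\l kowski correspondence. First I would invoke Lm.~\ref{lm: C* and anti-C* homo} to replace the condition that $\Phi^*_\rho$ be a $C^*$-homomorphism by the equivalent statement that $\Phi_\rho(\cA) \subset \cB(\cK)$ is a commutative subalgebra, equivalently that $\Phi_\rho$ factors through a commutative subalgebra $V \subset \cA$. By Lm.~\ref{lm: Schroedinger-HJW for maps} this property is independent of the chosen Stinespring dilation, so I am free to work with whichever dilation is most convenient. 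The theorem thus reduces to the single equivalence: $\rho$ is separable if and only if $\phi_\rho$ admits a dilation $\phi_\rho = v^*\Phi_\rho v$ in which $\Phi_\rho$ has commutative image.

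For the direction ``commutative image $\Rightarrow$ separable,'' I would exploit the spectral structure of the finite-dimensional commutative algebra $\Phi_\rho(\cA)$: writing its minimal projections as $P_1,\dots,P_m$, the homomorphism takes the form $\Phi_\rho(a) = \sum_k \chi_k(a) P_k$ for characters $\chi_k$, which pull back to positive functionals $\chi_k = \tr[M_k\,\cdot\,]$ on $\cA$ with $M_k \geq 0$. Substituting into $\phi_\rho = v^*\Phi_\rho v$ yields the measure-and-prepare form $\phi_\rho(a) = \sum_k \tr[M_k a]\,\sigma_k$ with $\sigma_k = v^* P_k v \geq 0$. Feeding this into the reverse Choi--Jamio\l kowski map, Eq.~(\ref{eq: reverse CJ-isomorphism}), a short computation using $\tr[M_k E_{ij}] = (M_k)_{ji}$ exhibits $\rho = \rho_{\phi_\rho} = \sum_k M_k^T \otimes \sigma_k$ as a convex combination of product operators, i.e.\ a separable state.

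For the converse, I would start from a separable decomposition $\rho = \sum_k p_k\,\rho_{\cA,k} \otimes \rho_{\cB,k}$ and compute, via Eq.~(\ref{eq: CJ-isomorphism}), that $\phi_\rho(a) = \sum_k \tr[M_k a]\,\rho_{\cB,k}$ with $M_k = p_k\,\rho_{\cA,k}^T \geq 0$, which is again of measure-and-prepare type. The remaining task is to realise this map by a dilation whose homomorphism factors through a commutative subalgebra $V \subset \cA$, so that Lm.~\ref{lm: C* and anti-C* homo} applies in the reverse direction. I expect this converse to be the main obstacle, since the construction must produce a homomorphism supported on $V$ directly from the classical factorisation $\cA \to V \to \cB$ indexing the decomposition, rather than from the generic Stinespring representation of $\cA$. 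Here the reduction to \emph{pure} states and purifications emphasised in Sec.~\ref{sec: PPT criterion} is likely decisive: on a pure state the relevant homomorphism is the identity representation restricted to the Schmidt support, whose image is commutative precisely when the Schmidt rank is one (hence the state is a product state), and the general statement should then follow by combining the dilation-independence of Lm.~\ref{lm: Schroedinger-HJW for maps} with the Schr\"odinger--HJW theorem that underwrites the passage to purifications.
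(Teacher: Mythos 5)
Your reduction of the theorem, via Lm.~\ref{lm: Schroedinger-HJW for maps} and Lm.~\ref{lm: C* and anti-C* homo}, to the single equivalence ``$\rho$ separable $\Leftrightarrow$ $\phi_\rho$ admits a dilation whose homomorphism has commutative image'' is exactly the paper's strategy, and your forward direction is correct, though it takes a slightly different route: the paper fixes the special dilation $\Phi_\rho = 1_\cB \otimes \mathrm{id}$, expands $v = \sum_k X_k \otimes |\xi_k\rangle$ over a basis diagonalising $V$, and identifies $\phi_\rho(a) = \sum_k E_k\, \tr[|\xi_k\rangle\langle\xi_k|\, a]$ as an entanglement-breaking channel, invoking the Horodecki--Shor--Ruskai equivalence, whereas you decompose the commutative image into minimal projections, pull the characters back to positive functionals $\tr[M_k\,\cdot\,]$, and compute the Choi state $\rho = \sum_k M_k^T \otimes \sigma_k$ directly from Eq.~(\ref{eq: reverse CJ-isomorphism}). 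That is a clean, self-contained alternative for this half.

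The converse, however, contains a genuine gap, and it sits precisely where you flag ``the main obstacle''. From separability you correctly derive the measure-and-prepare form $\phi_\rho(a) = \sum_k \tr[M_k a]\,\rho_{\cB,k}$, but the positive operators $M_k$ need not commute with one another, so in general no commutative subalgebra $V \subset \cA$ supports them all, and no dilation with commutative image can be manufactured from this data on $\cH_\cA$ alone. The paper's missing ingredient is a Naimark dilation: complete $(M_k)_{k=1}^K$ to a POVM by adjoining $F_0 = 1 - \sum_{k=1}^K M_k$, dilate the POVM to a projection-valued measure $(\pi_k)_{k=0}^K$ on a larger space $\cH_{\tilde{\cA}}$ via an isometry $\tilde{v}: \cH_\cA \ra \cH_{\tilde{\cA}}$, embed $\cA$ by $a \mapsto \tilde{v}a\tilde{v}^*$, and let $\Phi_\rho$ act through the commutative algebra $W = \langle \pi_k \rangle_{k=0}^K$; setting $v = \sum_k X_k \otimes |\xi_k\rangle$ with $X_k^*X_k = E_k$ and $|\xi_k\rangle\langle\xi_k| = \pi_k$ then produces the required Stinespring dilation with commutative image, and Lm.~\ref{lm: Schroedinger-HJW for maps} transfers the conclusion to every dilation. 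Your proposed substitute---reducing to pure states via purifications and Schr\"odinger--HJW---does not fill this hole: separability of a mixed state $\rho$ on $\cA \otimes \cB$ is not equivalent to a Schmidt-rank-one (product) condition on any purification (the maximally mixed two-qubit state is separable, yet all of its purifications are entangled across the cuts involving $\cA$), so Prop.~\ref{prop: PPT - pure bipartite case}, which concerns pure states of $\cA \otimes \cB$ itself, cannot be bootstrapped into the converse this way. Without the Naimark step the ``only if'' direction remains unproven.
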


\begin{proof}
    By Lm.~\ref{lm: Schroedinger-HJW for maps}, we can choose the Stinespring dilation $\phi_\rho = v^* \Phi_\rho v$ to be of the following simple form: $v: \cH_\cB \ra \cK$ for $\cK = \cH_\cB \otimes \cH_\cA$\footnote{For a general Stinespring dilation, one needs $\cK = \cH_\cB \otimes \cH_\cA \otimes \cH_\cA$ (cf. \cite{Stinespring1955,Naimark1943}). However, as it turns out in the case that both $\Phi$ and $\Phi^*$ are both $C^*$-homomorphisms, $\cK$ can be reduced by one factor of $\cH_\cA$.}
    and $\Phi_\rho = 1_\cB \otimes \mathrm{id}_{\mathrm{supp}(\Phi_\rho)}$.\footnote{This representation allows to interpret a quantum channel $\phi$ as a coarse-grained unitary bipartite channel on the target and some unknown ancillary system, after tracing out the latter. In particular, generalised measurements can be understood as projective measurements on a larger system \cite{Ozawa1984}.}
    In fact, if $\Phi^*_\rho$ is a $C^*$-homomorphism, then $\phi_{\rho^{T_\cA}} = \phi^*_\rho = v^*\Phi^*_\rho v$ is a Stinespring dilation (see Lm.~\ref{lm: transposition vs *}).\footnote{In particular, $\rho^{T_\cA}$ is positive in this case.} By Lm.~\ref{lm: C* and anti-C* homo}, $\Phi_\rho(\cA) \subset \cB(\cK)$ is a commutative subalgebra
    and there exists a commutative subalgebra $V \subset \cA$ such that $\Phi_\rho(V) = \Phi_\rho(\cA) \subset \cB(\cK)$. We thus have $\Phi_\rho(a) = \Phi_\rho|_V(a) = 1_\cB \otimes p_Vap_V$, where $p_V$ denotes the projection onto $V$.
    
    Next, let $(|\xi_k\rangle)_k$ be a basis of $\cH_\cA$ such that $|\xi_k\rangle\langle\xi_k| = p_k$ for all one-dimensional projections $p_k \in \mc{P}_1(V)$, and let $(|e_i\rangle)_i$ be an orthonormal basis of $\cH_\cB$. We can decompose $v: \cH_\cB \ra \cK$ by its action on basis states, $v(|e_j\rangle) := \sum_{ij} c^k_{ij}|e_i\rangle \otimes |\xi_k\rangle$ with $c^k_{ij} \in \C$, hence, $v = \sum_{ijk} c^k_{ij} |e_i\rangle\langle e_j|\otimes|\xi_k\rangle = \sum_{k} X_k \otimes |\xi_k\rangle$ with $X_k = \sum_{ij} c^k_{ij} |e_i\rangle\langle e_j|$. Consequently, for all $a = \sum_{r} a_rp_r \in V$
    \begin{align}\label{eq: decomposition as measure-prepare channel}
        \phi_\rho(a)
        = v^* \Phi_\rho|_V(a) v
        = v^* (1_\cB \otimes a) v
        = \sum_{kl} X_l^*X_k\ \langle\xi_l|\sum_r a_r p_r|\xi_k\rangle
        &= \sum_{k} E_k\ \mathrm{tr}_{\cH_\cA}[|\xi_k\rangle\langle\xi_k|\ a]\; .
    \end{align}
    Clearly, $E_k = X_k^*X_k \in \cB_+$, hence, (after normalisation $E_k \rightarrow E_k/\tr[E_k]$, $|\xi_k\rangle\langle\xi_k| \rightarrow \tr[E_k]|\xi_k\rangle\langle\xi_k|$) $\phi_\rho$ is an entanglement-breaking channel,\footnote{Entanglement-breaking channels are also called \emph{measure-prepare channels} or \emph{in Holevo form} (see \cite{HorodeckiShorRuskai2003,Holevo1998}).} equivalently $\rho$ is a separable state \cite{HorodeckiShorRuskai2003}.
    
    Conversely, let $\rho$ be a separable state.\footnote{Clearly, $\rho^{T_\cA}$ is positive, equivalently $\phi_{\rho^{T_\cA}}$ is completely positive in this case.}
    Then $\phi_\rho$ is an entanglement-breaking channel \cite{HorodeckiShorRuskai2003}, i.e., there exist states $E_k \in \mc{S}(\cB)$ and positive operators $F_k \geq 0$ such that
    \begin{equation*}
        \phi_\rho(a)
        = \sum_{k=1}^K E_k\ \mathrm{tr}_{\cH_\cA}[F_k a]\; .
    \end{equation*}
    We may extend $F_k$ to a positive operator-valued measure (POVM) $(F_k)_{k=0}^K$, $F_k \in \cA_+$ by setting $F_0 := 1 - \sum_{k=1}^K F_k$ such that $\sum_{k=0}^K F_k = 1$. 
    By Naimark's theorem \cite{Naimark1943,Stinespring1955}, $F$ admits a dilation $F = \tilde{v}^* \pi \tilde{v}$, where $\tilde{v}: \cH_{\cA} \ra \cH_{\tilde{\cA}}$ is a linear map and $(\pi_k)_{k=0}^K$, $\pi_k \in \mc{P}(\cH_{\tilde{\cA}})$ a projection-valued measure (PVM). Consequently, $\phi_\rho(a) = \tilde{\phi}_\rho(a) - E_0\ \tr[F_0 a]$, where e.g. $E_0 \propto \mathbbm{1}$ and
    \begin{align*}\label{eq: decomposition from measure-prepare channel}
        \tilde{\phi}_\rho(a)
        = \sum_{k=0}^K E_k\ \mathrm{tr}_{\cH_{\cA}}[\tilde{v}^*\pi_k\tilde{v}a]
        = \sum_{k=0}^K E_k\ \mathrm{tr}_{\cH_{\cA}}[\tilde{v}^*\pi_k \tilde{\Phi}_\rho(a)\tilde{v}]
        = \sum_{k=0}^K E_k\ \mathrm{tr}_{\cH_{\tilde{\cA}}}[\pi_k\tilde{\Phi}_\rho(a)]\; .
    \end{align*}
    Here, $\tilde{\Phi}_\rho: \cA \ra \tilde{\cA} = \cB(\cH_{\tilde{\cA}})$, $\tilde{\Phi}_\rho(a) := \tilde{v}a\tilde{v}^*$ is the natural embedding under the isometry $\tilde{v}$ (that is, $\tilde{v}^*\tilde{v} = 1$), and we used that $\tilde{\Phi}_\rho \tilde{v}\tilde{v}^* = \tilde{\Phi}_\rho$,
    where
    $\tilde{v}\tilde{v}^* \in \mc{P}(\cH_{\tilde{\cA}})$, $(\tilde{v}\tilde{v}^*)\cH_{\tilde{\cA}} \cong \cH_\cA$ in the last step. 
    Note that $(a,a') := \mathrm{tr}_{\cH_{\tilde{\cA}}}[a'^*a]$ defines an inner product on $\tilde{\cA}$. It follows that we can restrict the action of $\tilde{\Phi}_\rho$ to the pre-image of the commutative subalgebra $W := \langle\pi_k\rangle_{k=0}^K \subset \tilde{\cA}$, spanned by the projections $\pi_k$. More precisely, we define $\Phi_\rho: \cA \ra \cB(\mc{K})$ for $\mc{K} = \cH_\cB \otimes \cH_{\tilde{\cA}}$ by
    \begin{equation*}
        \Phi_\rho(a) =
        \begin{cases}
            1_\cB \otimes \tilde{\Phi}_\rho &\mathrm{for\ all\ } a \in \tilde{\Phi}^{-1}_\rho(W)\\
            0 &\mathrm{otherwise}
        \end{cases}\; .
    \end{equation*}
    Clearly, $\Phi_\rho$ is a $C^*$-homomorphism since $\tilde{\Phi}_\rho|_{\tilde{\Phi}^{-1}_\rho(W)}$ is. Moreover, $\Phi_\rho(\cA) \subset \cB(\cK)$ is a commutative subalgebra by construction,
    hence, $\Phi^*_\rho$ is a $C^*$-homomorphism by Lm.~\ref{lm: C* and anti-C* homo}.
    Finally, we obtain Stinespring dilations $\phi_\rho = v^*\Phi_\rho v$ and $\phi^*_\rho = v^*\Phi^*_\rho v$ as before: define $v = \sum_{k=1}^K X_k \otimes |\xi_k\rangle$ with $X_k^*X_k = E_k$ and with $(|\xi_k\rangle)_{k=0}^K$ the orthonormal basis of $\cH_{\tilde{\cA}}$, corresponding to the commutative subalgebra $W \subset \tilde{\cA}$, i.e., $|\xi_k\rangle\langle\xi_k|=\pi_k$.
\end{proof}

We have used the fact that a state is separable if and only if its corresponding quantum channel in Eq.~(\ref{eq: CJ-isomorphism}) is a measure-prepare channel \cite{HorodeckiShorRuskai2003}. The latter requires a decomposition of $v$ as in Eq.~(\ref{eq: decomposition as measure-prepare channel}). 
Clearly, such a decomposition exists for any commutative subalgebra $V \subset \cA$. In turn, Thm.~\ref{thm: dilated PPT criterion} shows that such a decomposition exists for all of $\cA$ if and only if $\Phi_\rho(\cA)$ is a commutative subalgebra.

Moreover, Thm.~\ref{thm: dilated PPT criterion} sheds new light on the reason why the PPT criterion is not necessary in general: let $\phi_\rho = v^*\Phi_\rho v$ be a Stinespring dilation and let $\phi^*_\rho = \phi_{\rho^{T_\cA}}$ be completely positive, this does not imply that $\phi^*_\rho = v^*\Phi^*_\rho v$ is a Stinespring dilation for $\phi^*_\rho$.

We record the following corollary of Thm.~\ref{thm: dilated PPT criterion}.

\begin{corollary}\label{cor: commutative extension problem}
    Let $\rho \in \SAB$, let $\phi_\rho$ the map under the isomorphism in Eq.~(\ref{eq: CJ-isomorphism}), and let $\phi_\rho = v^*\Phi_\rho v$ be a 
    Stinespring dilation. Then $\rho$ is separable 
    if and only if $\phi_\rho = \phi_\rho|_V$ for a commutative subalgebra $V \subset \cA$.
\end{corollary}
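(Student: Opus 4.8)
The plan is to obtain this as a direct translation of Thm.~\ref{thm: dilated PPT criterion} together with Lm.~\ref{lm: C* and anti-C* homo}, passing between the level of the $C^*$-homomorphism $\Phi_\rho$ and the level of the channel $\phi_\rho = v^*\Phi_\rho v$. Writing $p_V a p_V := \sum_k p_k a p_k$ for the pinching onto a commutative subalgebra $V \subset \cA$ spanned by rank-one projections $p_k = |\xi_k\rangle\langle\xi_k|$ (as in the proof of Thm.~\ref{thm: dilated PPT criterion}), I read ``$\phi_\rho = \phi_\rho|_V$'' as $\phi_\rho(a) = \phi_\rho(p_V a p_V)$ for all $a \in \cA$, matching the meaning of ``$\Phi_\rho = \Phi_\rho|_V$'' in Lm.~\ref{lm: C* and anti-C* homo}.

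First I would combine the two results already available: by Thm.~\ref{thm: dilated PPT criterion}, $\rho$ is separable if and only if $\Phi^*_\rho$ is a $C^*$-homomorphism, and by Lm.~\ref{lm: C* and anti-C* homo} this holds if and only if $\Phi_\rho(\cA) \subset \cB(\cK)$ is commutative, equivalently $\Phi_\rho = \Phi_\rho|_V$ for some commutative $V \subset \cA$. For the forward direction it then suffices to note that restriction passes through the Stinespring sandwich: if $\Phi_\rho(a) = \Phi_\rho(p_V a p_V)$ for all $a$, then
\begin{equation*}
    \phi_\rho(a) = v^*\Phi_\rho(a)v = v^*\Phi_\rho(p_V a p_V)v = \phi_\rho(p_V a p_V) = \phi_\rho|_V(a)\;,
\end{equation*}
so $\phi_\rho = \phi_\rho|_V$.

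For the converse I would avoid lifting $\phi_\rho = \phi_\rho|_V$ back to $\Phi_\rho$ and instead exploit commutativity of $V$ directly. Since $p_k a p_k = \tr_{\cH_\cA}[p_k a]\, p_k$ for rank-one $p_k$, the hypothesis $\phi_\rho = \phi_\rho|_V$ yields
\begin{equation*}
    \phi_\rho(a) = \sum_k \phi_\rho(p_k a p_k) = \sum_k E_k\, \tr_{\cH_\cA}[p_k a]\;, \qquad E_k := \phi_\rho(p_k) \in \cB_+\;,
\end{equation*}
which is exactly the measure-prepare form of Eq.~(\ref{eq: decomposition as measure-prepare channel}); by \cite{HorodeckiShorRuskai2003} $\phi_\rho$ is then entanglement-breaking and $\rho$ separable. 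The main obstacle I anticipate is not mathematical but notational: one must fix the meaning of $\phi_\rho|_V$ consistently with $\Phi_\rho|_V$ and confirm that a positive map depending only on a commutative subalgebra is automatically of Holevo form, both of which are routine given the decomposition already established in the proof of Thm.~\ref{thm: dilated PPT criterion}.
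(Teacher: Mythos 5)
Your proposal is correct and takes essentially the same route as the paper, whose proof of Cor.~\ref{cor: commutative extension problem} is a one-line citation of Lm.~\ref{lm: Schroedinger-HJW for maps}, Thm.~\ref{thm: dilated PPT criterion} and Lm.~\ref{lm: C* and anti-C* homo}: your forward direction is exactly that chain followed by sandwiching with $v$, and your converse makes explicit the measure-prepare observation the paper records in the paragraph after Thm.~\ref{thm: dilated PPT criterion}, namely that a decomposition as in Eq.~(\ref{eq: decomposition as measure-prepare channel}) exists for any commutative subalgebra $V \subset \cA$. The only caveat is that your identity $p_k a p_k = \tr[p_k a]\, p_k$ requires the $p_k$ to be rank one, which is harmless because $\phi_\rho = \phi_\rho|_V$ persists when $V$ is enlarged to a maximal commutative subalgebra (whose minimal projections are rank one), consistent with the paper's own without-loss-of-generality reduction in the proof of Lm.~\ref{lm: C* and anti-C* homo}.
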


\begin{proof}
    This follows immediately from Lm.~\ref{lm: Schroedinger-HJW for maps}, Thm.~\ref{thm: dilated PPT criterion}, and Lm.~\ref{lm: C* and anti-C* homo}.
\end{proof}

We surmise that Thm.~\ref{thm: dilated PPT criterion}---especially in the form of Cor.~\ref{cor: commutative extension problem}---entails improvements of existing protocols for practical verification of entanglement, e.g. in the form of semi-definite linear programmes in \cite{DohertyParriloSpedalieri2002,DohertyParriloSpedalierei2004}. We leave this as an exciting direction for future research. In the remainder, we focus on the physical content of Thm.~\ref{thm: dilated PPT criterion} in terms of the arrow of time.

\subsection{Entanglement and time orientations}\label{sec: entanglement and time-orientedness}

Comparing Prop.~\ref{prop: PPT trivial on decomposable states} with Thm.~\ref{thm: dilated PPT criterion}, it is natural to study the difference between Jordan $*$-homomorphisms and $C^*$-homomorphisms. To this end, we first review some basic facts about Jordan algebras and their dynamics in terms of one-parameter groups of automorphisms, before proving a reformulation of Thm.~\ref{thm: dilated PPT criterion} in terms of local time orientations.\\

\textbf{Jordan algebras.} We recall that an abstract Jordan algebra $\cJ$ is an algebra over a field with a product that satisfies $a \circ b = b \circ a$ and $(ab)(aa) = (a(b(aa))$ for all $a,b \in \cJ$.\footnote{For an extensive study of Jordan algebras, see \cite{McCrimmon2004}.}. Given an associative algebra $\cA$ one obtains a Jordan algebra $\cJ(\cA)$ by symmetrisation. 
If $\cJ = \cJ(\cA)$ for an associative algebra $\cA$, then the Jordan algebra is called \emph{special}, otherwise it is called \emph{exceptional}.\footnote{The prototypical exceptional Jordan algebra is the so-called Albert algebra $H_3(\mathbb{O})$ \cite{Albert1934}.}
In particular, every $C^*$- (and von Neumann)\footnote{Recall that a $C^*$-algebra is an involutive Banach algebra (closed in norm) satisfying the defining $C^*$-property, $||x^*x|| = ||x||^2$. 
A von Neumann algebra is a $C^*$-algebra closed in the weak operator topology.} algebra defines a JB(W) algebra: a JB(W) algebra is a (weakly closed) Jordan algebra that is also a Banach space ($||a \circ b|| \leq ||a||\cdot||b||$) such that $||a^2|| = ||a||^2 \leq ||a^2 + b^2||$. For simplicity, here we only consider matrix algebras over the complex numbers, $\cA = M_n(\C)$, $n \in \mathbb{N}$.
In this case, the set of Hermitian matrices $H_n(\C)$ under the anti-commutator $\{a,b\} := ab + ba$ defines a real Jordan algebra $\cJ(\cA)_\mathrm{sa} := (H_n(\C),\{\cdot,\cdot\})$. We denote its complexification by $\cJ(\cA) = \cJ(\cA)_\mathrm{sa} +i\cJ(\cA)_\mathrm{sa}$.

Crucially, Jordan products are commutative. As such the Jordan algebra $\cJ(\cA)_\mathrm{sa}$ is the same as the Jordan algebra $\cJ(\cA^\op)_\mathrm{sa}$ of the opposite algebra $\cA^{\op}$, i.e., the algebra obtained from $\cA$ by reversing the order of composition (matrix multiplication), 
\begin{align}\label{eq: two associative products}
    \begin{split}
        \cA &:= \{a \in \cA \mid \forall a_1,a_2 \in \cA:\ a_1 \cdot_+ a_2 = \frac{1}{2}\{a_1,a_2\} + \frac{1}{2}[a_1,a_2]\}\; ,\\
        \cA^\mathrm{op} &:= \{a \in \cA \mid \forall a_1,a_2 \in \cA:\ a_1 \cdot_- a_2 = \frac{1}{2}\{a_1,a_2\} - \frac{1}{2}[a_1,a_2]\}\; .
    \end{split}
\end{align}
The difference between the associative algebras $\cA$ and $\cA^\mathrm{op}$ 
is the anti-symmetric part or commutator. In order to extract from this a notion of time directionality, we relate commutators to (infinitesimal) symmetries of $\cJ(\cA)_\mathrm{sa}$.\\

\textbf{Time orientations.} Dynamics is naturally expressed in terms of one-parameter groups of Jordan automorphisms $\R \ni t \mt \mathrm{Aut(\cJ(\cA)_\mathrm{sa})}$. Recall that for $\cA = \cB(\cH_A)$ (in particular, for $\cA = M_n(\C)$) every such one-parameter group is given by conjugation with a unitary or anti-unitary operator by Wigner's theorem \cite{Wigner1931_GruppentheorieUndQM,Bargmann1964}. In fact, Wigner's theorem holds on the level of Jordan algebras \cite{LandsmanLindenhovius2018,DoeringFrembs2019a}. In $\cA$, we obtain one-parameter groups of the form
\begin{equation}\label{eq: one-parameter groups}
    e^{t\ad(ia_1)}(a_2)
    = e^{ita_1}a_2e^{-ita_1} \quad \quad \forall t \in \mathbb{R}, a_1,a_2 \in \cJ(\cA)_\mathrm{sa}\; .
\end{equation}
If we interpret $a_1$ as the Hamiltonian of the system, then Eq.~(\ref{eq: one-parameter groups}) is just the standard expression for unitary evolution, in which $t$ plays the role of a \emph{time parameter}. More generally, Eq.~(\ref{eq: one-parameter groups}) defines a one-parameter group for every element $a \in \cJ(\cA)_\mathrm{sa}$. In particular, note that for every $a \in \cJ(\cA)_\mathrm{sa}$ and $\lambda \in \R_+$ also $\lambda a \in \cJ(\cA)_\mathrm{sa}$. Hence, we cannot give physical meaning to the absolute value of $t$ without first specifying a Hamiltonian $a$. 

Nevertheless, the sign of $t$ carries physical meaning independent of the choice of $a \in \cJ(\cA)_\mathrm{sa}$. To see this, we remark that inherent in Eq.~(\ref{eq: one-parameter groups}) is the canonical identification between self-adjoint operators (observables) and generators of Jordan automorphisms (symmetry generators), 
$a \mapsto \ad(ia)$ for all $a \in \cJ(\cA)_\mathrm{sa}$ \cite{GrginPetersen1974,AlfsenShultz1998,Baez2020}. Moreover, note that changing this identification to $a \mapsto \ad(-ia)$ results in a sign change for the parameter $t$ in Eq.~(\ref{eq: one-parameter groups}), equivalently to a change in the commutator 
and thus to a change in the order of composition from $\cA$ to $\cA^\mathrm{op}$ in Eq.~(\ref{eq: two associative products}) (see also Lm.~\ref{lm: Jordan to C*-homo} below). 

In contrast, in $\cJ(\cA)$ there is no canonical identification between self-adjoint operators and generators of Jordan automorphisms \cite{AlfsenShultz1998,Hanche-OlsenStormer1984JordanOA}. 
Consequently, in $\cJ(\cA)$ we cannot interpret the sign of the parameter $t$ in the corresponding one-parameter groups independently of the choice of Hamiltonian $a \in \cJ(\cA)_\mathrm{sa}$. By comparison, lifting $\cJ(\cA)$ to $\cA$ thus equips the latter with an intrinsic direction of time, mediated by the identification $a \mapsto \ad(ia)$.\footnote{Generalising the mapping $a \mapsto \ad(ia)$, \cite{AlfsenShultz1998a} characterise those maps which lift JB(W) to $C^*$ (von Neumann) algebras. By their physical interpretation, such maps are called \emph{dynamical correspondences}.} To emphasise this distinction, we define the \emph{canonical time orientation $\Psi_\cA$} on $\cJ(\cA)$ by\footnote{The notion of time orientation was introduced in \cite{Doering2014,FrembsDoering2022a}.}
\begin{equation}\label{eq: canonical time orientation}
    \Psi_\cA := \Ad: \R \times \cJ(\cA)_\mathrm{sa} \ni (t,a) \mapsto e^{t\ad(ia)},
\end{equation}
and call $\cA_+ := (\cJ(\cA),\Psi_\cA)$ the observable Jordan algebra together with its canonical time orientation. Similarly, we define the \emph{reverse time orientation} by 
\begin{equation}\label{eq: reverse time orientation}
    \Psi^*_\cA := * \circ \Psi_\cA: \R \times \cJ(\cA)_\mathrm{sa} \ni (t,a) \mapsto e^{-t\ad(ia)}\; ,
\end{equation}
and set $\cA_- := (\cJ(\cA),\Psi^*_\cA)$.\footnote{By Thm.~23 in \cite{AlfsenShultz1998a}, these are the only time orientations on $\cJ(\cA)$, deriving from $\cA$ and $\cA^\mathrm{op}$, respectively.} 


\textbf{Entanglement and time orientation.} Returning to Thm.~\ref{thm: dilated PPT criterion}, we are interested in the difference between Jordan $*$-homomorphism and $C^*$-homomorphism. Recall that a \emph{Jordan $*$-homomorphism} $\Phi: \cJ(\cA) \ra \cJ(\cB)$ is a linear map preserving the Hermitian adjoint, $* \circ \Phi = \Phi \circ *$, equivalently, $\Phi|_{\cJ(\cA)_\mathrm{sa}}: \cJ(\cA)_\mathrm{sa} \ra \cJ(\cB)_\mathrm{sa}$, and the Jordan product, i.e., $\Phi(\{a_1,a_2\}) = \{\Phi(a_1),\Phi(a_2)\}$ for all $a_1,a_2 \in \mc{J}(\cA)_\mathrm{sa}$.
Consequently, $\Phi: \cJ(\cA) \ra \cJ(\cB)$ lifts to a $C^*$-homomorphism $\Phi: \cA \ra \cB$ if and only if it preserves commutators, $\Phi([a_1,a_2]) = [\Phi(a_1),\Phi(a_2)]$. Using Eq.~(\ref{eq: canonical time orientation}), we re-express this condition in terms of one-parameter groups of Jordan automorphisms.

\begin{lemma}\label{lm: Jordan to C*-homo}
    Let $\Phi: \cJ(\cA) \ra \cJ(\cB)$ be a Jordan $*$-homomorphism. Then $\Phi: \cA \ra \cB$ lifts to a $C^*$-homomorphism if and only if it preserves the canonical time orientations $\Psi_\cA$ and $\Psi_\cB$,
    \begin{equation}\label{eq: C*-homo preserves orientations}
        \forall t \in \R, a \in \cJ(\cA)_\mathrm{sa}:\ \Phi \circ \Psi_\cA(t,a) 
        = \Psi_\cB(t,\Phi(a)) \circ \Phi\; .
    \end{equation}
\end{lemma}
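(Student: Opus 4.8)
The plan is to reduce the statement to the commutator-preservation criterion recorded in the paragraph preceding the lemma, namely that a Jordan $*$-homomorphism $\Phi$ lifts to a $C^*$-homomorphism $\Phi: \cA \ra \cB$ if and only if $\Phi([a_1,a_2]) = [\Phi(a_1),\Phi(a_2)]$ for all $a_1,a_2 \in \cJ(\cA)_\mathrm{sa}$. It therefore suffices to show that commutator preservation is equivalent to the time-orientation condition in Eq.~(\ref{eq: C*-homo preserves orientations}). First I would unfold both sides of that condition by evaluating on an arbitrary $b \in \cJ(\cA)_\mathrm{sa}$, using the explicit form of the one-parameter groups in Eq.~(\ref{eq: one-parameter groups}) together with the definition of the canonical time orientation in Eq.~(\ref{eq: canonical time orientation}): the left-hand side becomes $\Phi(e^{ita} b e^{-ita})$ and the right-hand side becomes $e^{it\Phi(a)} \Phi(b) e^{-it\Phi(a)}$, so the condition reads $\Phi(e^{ita} b e^{-ita}) = e^{it\Phi(a)} \Phi(b) e^{-it\Phi(a)}$ for all $t \in \R$ and all $a,b \in \cJ(\cA)_\mathrm{sa}$.

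For the easy direction, suppose $\Phi$ is a $C^*$-homomorphism. Then $\Phi$ preserves the associative product, so applying $\Phi$ termwise to the exponential series gives $\Phi(e^{ita}) = e^{it\Phi(a)}$, and substituting this into the unfolded left-hand side reproduces the right-hand side, establishing Eq.~(\ref{eq: C*-homo preserves orientations}). The only care needed here is that for a non-unital $\Phi$ the exponential identity must be read inside the corner $p\cB p$ with relative unit $p = \Phi(1)$, rather than in $\cB$ itself; this is a purely bookkeeping point.

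The main direction is time-orientation preservation $\Rightarrow$ commutator preservation, and I would obtain it by differentiating the identity $\Phi(e^{ita} b e^{-ita}) = e^{it\Phi(a)} \Phi(b) e^{-it\Phi(a)}$ with respect to $t$ at $t = 0$. Since $\Phi$ is linear on the finite-dimensional space $\cJ(\cA)$, hence continuous, it commutes with the derivative, so the left-hand side differentiates to $\Phi(i[a,b]) = i\Phi([a,b])$ while the right-hand side differentiates to $i[\Phi(a),\Phi(b)]$; cancelling $i$ yields $\Phi([a,b]) = [\Phi(a),\Phi(b)]$ for all self-adjoint $a,b$. This extends to all of $\cJ(\cA)$ by writing each element as $a_1 + ia_2$ with $a_1,a_2 \in \cJ(\cA)_\mathrm{sa}$ and invoking complex linearity. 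Combined with the anticommutator identity already satisfied by the Jordan $*$-homomorphism $\Phi$ and the decomposition $a_1 \cdot_+ a_2 = \frac{1}{2}\{a_1,a_2\} + \frac{1}{2}[a_1,a_2]$ from Eq.~(\ref{eq: two associative products}), this shows $\Phi$ preserves the full associative product and hence is a $C^*$-homomorphism. I expect no genuine obstacle: the substance is the single differentiation at $t = 0$, and the only delicate points are the interchange of $\Phi$ with the derivative (immediate in finite dimensions) and the non-unital corner issue noted above.
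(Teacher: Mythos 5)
Your proposal is correct and follows essentially the same route as the paper: the easy direction via $\Phi(e^{ita}) = e^{it\Phi(a)}$, and the converse by differentiating the intertwining identity $\Phi(e^{ita}be^{-ita}) = e^{it\Phi(a)}\Phi(b)e^{-it\Phi(a)}$ at $t=0$ to obtain commutator preservation, which together with Jordan-product preservation yields preservation of the associative product. The extra care you take with the non-unital corner $p\cB p$ and the complex-linear extension are fine refinements of details the paper leaves implicit.
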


\begin{proof}
    Clearly, a $C^*$-homomorphism $\Phi$ preserves Eq.~(\ref{eq: C*-homo preserves orientations}). Conversely, by differentiation, 
    \begin{align*}
        &\left.\frac{d}{dt}\right\vert_{t=0} \left(\Phi \circ \Psi_\cA(t,a_1)\right)(a_2)
        = \left.\frac{d}{dt}\right\vert_{t=0} \left(\Psi_\cB(t,\Phi(a_1)) \circ \Phi\right)(a_2) \\
        \Leftrightarrow \quad & \Phi\left(\left.\frac{d}{dt}\right\vert_{t=0} e^{ita_1}a_2e^{-ita_1}\right) = \left.\frac{d}{dt}\right\vert_{t=0} e^{it\Phi(a_1)}\Phi(a_2)e^{-it\Phi(a_1)}\\
        \Leftrightarrow \quad &\Phi([a_1,a_2]) = [\Phi(a_1),\Phi(a_2)]\; .
    \end{align*}
    for all $a_1,a_2 \in \cJ(\cA)_\mathrm{sa}$. 
    $\Phi$ thus preserves commutators, hence, is a $C^*$-homomorphism.
\end{proof}


Assume $\Phi: \cA \ra \cB(\cK)$ in Eq.~(\ref{eq: C*-homo preserves orientations}) is part of a Stinespring dilation $\phi_\rho = v^*\Phi v$ for the image of a bipartite state $\rho \in \mc{S}(\cA \otimes \cB)$ under the isomorphism in Eq.~(\ref{eq: CJ-isomorphism}). Since $\cB$ arises from $\cB(\cK)$ by restriction under $v$, the time orientation $\Psi_\cB$ on $\cB$ uniquely lifts to a time-orientation $\Psi'_\cB$ on $\cB(\cK)$.

This motivates the following definition, which first appeared in the context of classifying quantum states from non-signalling bipartite correlations \cite{FrembsDoering2022a} (see also Sec.~\ref{sec: quantum from non-signalling correlations}).

\begin{definition}\label{def: time-oriented state}
    Let $\rho \in \SAB$. $\rho$ is called \emph{time-oriented with respect to $\cA_- \cong (\cJ(\cA),\Psi^*_\cA)$} and $\cB_+ = (\cJ(\cB),\Psi_\cB)$ if and only if $\Phi_\rho: \cA \ra \cB(\cK)$ in $\phi_\rho = v^*\Phi_\rho v$ preserves time orientations $\Psi^*_\cA = * \circ \Psi_\cA$ and $\Psi'_\cB$,
    \begin{equation*}
        \forall t \in \R, a \in \cJ(\cA)_\mathrm{sa}:\ \Phi_\rho \circ \Psi^*_\cA(t,a) = \Psi'_\cB(t,\Phi(a)) \circ \Phi_\rho\; .
    \end{equation*}
\end{definition}

We remark that the appearance of the reverse time orientation $\Psi^*_\cA$ in Def.~\ref{def: time-oriented state} is a consequence of the identification of bipartite quantum states and quantum channels via Choi's theorem \cite{Choi1975} (for more details, see \cite{Frembs2022b}). Def.~\ref{def: time-oriented state} is the missing piece of physical data to identify bipartite non-signalling distributions with quantum states \cite{FrembsDoering2022a}. Together, this shows that quantum states encode information about the relative time orientation between subsystems.

This is a genuine quantum effect. What is more, it is intimately related with entanglement: in fact, Def.~\ref{def: time-oriented state} allows us to reformulate the separability criterion in Thm.~\ref{thm: dilated PPT criterion} in terms of time orientations.

\begin{theorem}\label{thm: entanglement and time}
    A bipartite state $\rho \in \cS(\cA \otimes \cB)$ is separable if and only if it is time-oriented with respect to $\cA_- = (\cJ(\cA),\Psi^*_\cA)$ and $\cB_+ = (\cJ(\cB),\Psi_\cB)$ as well as $\cA_-$ and $\cB_- = (\cJ(\cB),\Psi^*_\cB)$.
\end{theorem}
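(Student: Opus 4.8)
The plan is to reduce the statement to the commutativity criterion already in hand: combining Thm.~\ref{thm: dilated PPT criterion} with Lm.~\ref{lm: C* and anti-C* homo}, $\rho$ is separable if and only if $\Phi_\rho(\cA) \subset \cB(\cK)$ is a commutative subalgebra. It then suffices to show that the conjunction of the two time-orientation conditions is precisely this commutativity. The bridge is the differentiation argument of Lm.~\ref{lm: Jordan to C*-homo}: each intertwining relation in Def.~\ref{def: time-oriented state}, differentiated at $t=0$, collapses to a relation between $\Phi_\rho([a_1,a_2])$ and $[\Phi_\rho(a_1),\Phi_\rho(a_2)]$, with a sign dictated by the chosen orientations on $\cA$ and $\cB$.

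Concretely, I would first fix a Stinespring dilation $\phi_\rho = v^*\Phi_\rho v$; by Lm.~\ref{lm: Schroedinger-HJW for maps} the conclusion is independent of this choice, and $\Phi_\rho$ is in particular a $C^*$-homomorphism, hence a Jordan $*$-homomorphism. Differentiating the relation for $(\cA_-,\cB_+)$, namely $\Phi_\rho \circ \Psi^*_\cA(t,a_1) = \Psi'_\cB(t,\Phi_\rho(a_1)) \circ \Phi_\rho$, at $t=0$ gives $-i\Phi_\rho([a_1,a_2]) = i[\Phi_\rho(a_1),\Phi_\rho(a_2)]$, i.e. $\Phi_\rho([a_1,a_2]) = -[\Phi_\rho(a_1),\Phi_\rho(a_2)]$; the minus sign is exactly the effect of the reverse orientation $\Psi^*_\cA$. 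Since $\Phi_\rho$ is a Jordan $*$-homomorphism, reversing commutators is equivalent to $\Phi_\rho$ being an anti-homomorphism, i.e. to $\Phi^*_\rho$ being a $C^*$-homomorphism—exactly the hypothesis of Thm.~\ref{thm: dilated PPT criterion}. Repeating for $(\cA_-,\cB_-)$, where $\Psi'_\cB$ is replaced by the lift $(\Psi^*_\cB)'$ of the reverse orientation on $\cB$, the second sign flip cancels the first and yields $\Phi_\rho([a_1,a_2]) = +[\Phi_\rho(a_1),\Phi_\rho(a_2)]$, i.e. the $C^*$-homomorphism property; because $\Phi_\rho$ already satisfies this, the condition $(\cA_-,\cB_-)$ is automatic for states. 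I would nonetheless retain both, since their symmetric conjunction forces commutativity without separately invoking that $\Phi_\rho$ preserves commutators.

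Combining the two identities gives $[\Phi_\rho(a_1),\Phi_\rho(a_2)] = -[\Phi_\rho(a_1),\Phi_\rho(a_2)]$ for all $a_1,a_2 \in \cJ(\cA)_\mathrm{sa}$, hence $\Phi_\rho(\cA)$ is commutative; conversely, if $\Phi_\rho(\cA)$ is commutative both relations reduce to $0=0$ and hold trivially. Feeding commutativity back through Lm.~\ref{lm: C* and anti-C* homo} ($\Phi^*_\rho$ is a $C^*$-homomorphism) and Thm.~\ref{thm: dilated PPT criterion} ($\rho$ separable) closes the chain of equivalences. The main point to get right is the bookkeeping of the two independent orientation sign flips and the role of the lift $\Psi'_\cB$ of $\Psi_\cB$ from $\cB$ to $\cB(\cK)$; the conceptual payoff, worth flagging, is that the two conditions together express compatibility of $\rho$ with \emph{either} time orientation on $\cB$, which can only hold when $\Phi_\rho(\cA)$ lands in an orientation-agnostic, i.e. commutative (classical), subalgebra.
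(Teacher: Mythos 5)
Your proof is correct and takes essentially the same route as the paper's: differentiate the two intertwining relations of Def.~\ref{def: time-oriented state} at $t=0$ (the mechanism of Lm.~\ref{lm: Jordan to C*-homo}), combine the two sign-opposite commutator identities into $[\Phi_\rho(a_1),\Phi_\rho(a_2)] = -[\Phi_\rho(a_1),\Phi_\rho(a_2)] = 0$, and close the equivalence via Lm.~\ref{lm: C* and anti-C* homo} and Thm.~\ref{thm: dilated PPT criterion}, exactly as the paper does. The only divergence is bookkeeping of which single orientation condition is vacuous for a Stinespring dilation: you attribute $(\cA_-,\cB_-)$ (commutator preservation) to Stinespring's theorem, whereas the paper instead labels $(\cA_-,\cB_+)$ as the automatic one by citing Thm.~3 of \cite{FrembsDoering2022a}; since both conditions are hypotheses in the nontrivial direction, this does not affect correctness, and under the paper's stated sign conventions for $\Psi^*_\cA$ and $\Psi'_\cB$ your assignment is in fact the internally consistent reading.
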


\begin{proof}
    By Thm.~\ref{thm: dilated PPT criterion}, $\rho$ is separable if and only if $\Phi_\rho$ and $\Phi^*_\rho$ are $C^*$-homomorphisms for any Stinespring dilation $\phi_\rho = v^*\Phi_\rho v$. Since $C^*$-homomorphisms preserve time orientations by Lm.~\ref{lm: Jordan to C*-homo}, $\rho$ is time-oriented with respect to both $\cA_-$ and $\cB_+$ as well as $\cA_-$ and $\cB_-$.\footnote{Note that since $\Phi_\rho: \cA \ra \cB(\cK)$ is Hermiticity-preserving, i.e., $\Phi^*_\rho(a) = \Phi_\rho(a^*)$ for all $a \in \cA$, $\Phi_\rho$ preserves time orientations $\Psi_\cA$ and $\Psi_\cB$ if and only if it preserves time orientations $\Psi^*_\cA$ and $\Psi^*_\cB$; similarly $\Phi_\rho$ preserves time orientations $\Psi^*_\cA$ and $\Psi_\cB$ if and only if it preserves time orientations $\Psi_\cA$ and $\Psi^*_\cB$.}
    
    Conversely, $\rho$ is time-oriented with respect to $\cA_-$ and $\cB_+$ by Thm.~3 in \cite{FrembsDoering2022a}, i.e.,
    \begin{equation}\label{eq: orientation 1}
        \forall t \in \R, a \in \cJ(\cA)_\mathrm{sa}:\ \Phi_\rho \circ \Psi^*_\cA(t,a)
        = \Phi_\rho \circ \Psi_\cA(-t,a)
        = \Psi'_\cB(t,\Phi(a)) \circ \Phi_\rho\; .
    \end{equation}
    where $\Psi^*_\cA(t,a) = * \circ \Psi_\cA(t,a) = \Psi_\cA(-t,a)$ by Eq.~(\ref{eq: reverse time orientation}). In particular, $\Phi_\rho$ in $\phi_\rho = v^*\Phi_\rho v$ is a $C^*$-homomorphism \cite{Stinespring1955}. If $\rho$ is also time-oriented with respect to $\cA_-$ and $\cB_-$, then by Def.~\ref{def: time-oriented state},
    \begin{align}\label{eq: orientation 2}
        \begin{split}
        &\forall t \in \R, a \in \cJ(\cA)_\mathrm{sa}:\ \Phi_\rho \circ \Psi^*_\cA(t,a) = \Psi'^*_\cB(t,\Phi(a)) \circ \Phi_\rho \\
        \Longleftrightarrow \quad 
        &\forall t \in \R, a \in \cJ(\cA)_\mathrm{sa}:\ \Phi_\rho \circ \Psi_\cA(t,a) = \Psi'_\cB(t,\Phi(a)) \circ \Phi_\rho\; .
        \end{split}
    \end{align}
    Differentiating Eq.~(\ref{eq: orientation 1}) and Eq.~(\ref{eq: orientation 2}) yields 
    $[\Phi_\rho(a_1),\Phi_\rho(a_2)] = -[\Phi_\rho(a_1),\Phi_\rho(a_2)] = 0$ for all $a_1,a_2 \in \mc{J}(\cA)_\mathrm{sa}$ (cf. Lm.~\ref{lm: Jordan to C*-homo}). It follows that $\Phi_\rho(\cA) \subset \cB$ is a commutative subalgebra, by Lm.~\ref{lm: C* and anti-C* homo} $\Phi^*_\rho$ is therefore a $C^*$-algebra homomorphism and by Thm.~\ref{thm: dilated PPT criterion} $\rho$ is separable.
\end{proof}

\subsection{Time orientations and the arrow of time}\label{sec: time orientations and the arrow of time}

In this section, we embed the classification of bipartite entanglement in terms of compatibility with time orientations in local observable algebras (Thm.~\ref{thm: entanglement and time}) into a wider context. We especially focus on time orientations as a complex structure on $\cJ(\cA)$, as well as their role within the intrinsic, thermodynamic arrow of time in von Neumann algebras.\\

\textbf{Time orientations and complex structure.} Following \cite{Doering2014,FrembsDoering2022a}, we have expressed the difference between $\cJ(\cA)$ and $\cA$ in terms of time orientations in Def.~\ref{def: time-oriented state}. As exponentials of dynamical correspondences \cite{AlfsenShultz1998a}, time orientations highlight the double role played by self-adjoint operators: as observables and generators of dynamics in quantum mechanics \cite{GrginPetersen1974,AlfsenShultz1998}. This perspective has some appeal when considering axiomatic reconstructions of quantum mechanics and possible generalisations they suggest.

For instance, note that a `quantum formalism' can be defined also over the real instead of the complex numbers (see e.g. \cite{Stueckelberg1960,HardyWootters2011}). More generally, several results aiming to reconstruct quantum mechanics arrive at the level of (special) Jordan algebras corresponding to associative algebras over the real, complex and quaterionic numbers (e.g. \cite{JordanVNWigner1934,GrginPetersen1974,Koecher1999,Soler1995}). In this context, it is interesting to note that time orientations define a complex structure on (the order derivations of) $\cJ(\cA)$ \cite{Connes1974,AlfsenShultz1998a}. Compare this with Eq.~(\ref{eq: canonical time orientation}), where we used the complex structure of the associative algebra $\cA$ implicitly to define the canonical time orientation $\Psi_\cA$. 
In this way, dynamical correspondences can be seen as a justification for the prominence of complex numbers in quantum mechanics \cite{Baez2012}.\footnote{For the intimate relationship between dynamical correspondences and Noether's theorem, see \cite{Baez2020}.} By Thm.~\ref{thm: entanglement and time}, these arguments are further inherently connected with quantum entanglement.\\

\textbf{Outlook: intrinsic dynamics and thermal time.} One of the deepest insights into the emergence of time from purely algebraic considerations arises in infinite dimensions and the structure theory of (hyperfinite) von Neumann algebras \cite{Connes1975,Connes1985,Haagerup1987,Haagerup2016}. The latter heavily rests on the foundational insights by Tomita and Takesaki \cite{Takesaki1970,TakesakiII}.

Given a von Neumann algebra $\cN$ and a faithful normal state $\omega \in \SN$, $\omega$ becomes a cyclic and separating vector $\Omega$ in its Gelfand-Naimark-Segal (GNS) representation. The operator defined by $S_\omega a\Omega := a^*\Omega$ for all $a \in \cN$ is closable, hence, has a polar decomposition $S_\omega = J_\omega \Delta^{\frac{1}{2}}_\omega$, where $J_\omega$ is an anti-unitary involution and $\Delta_\omega$ is a self-adjoint, positive operator. The fundamental results of \emph{Tomita-Takesaki theory} are summarised in the statements $J_\omega \cN J_\omega = \cN'$, where $\cN'$ is the commutant of $\cN$, and $\Delta_\omega^{it}\cN \Delta_\omega^{-it} = \cN$ for all $t \in \mathbb{R}$ \cite{Takesaki1970}. The latter implies that every faithful normal state $\omega \in \SN$ defines a one-parameter group of automorphisms $\sigma^\omega: \R \ra \mathrm{Aut}(\cN)$, $\sigma^\omega_t(a) \mt \Delta^{it} a \Delta^{-it}$, called the modular automorphism group of $\omega$.

Crucially, $S_\omega$ and thus $\sigma^\omega$ are state-dependent since they are defined with respect to the support of the state $\omega$. Despite this fact, the difference between $\sigma^\omega_t$ and $\sigma^{\omega'}_t$ is merely an inner automorphism $\sigma^\omega_t(a) = u_t\sigma^{\omega'}_t(a)u^{-1}_t$ for all $a \in \cN$, where the unitaries $(u_t)_{t \in \R}$ satisfy \emph{Connes' cocycle condition} $\sigma^\omega_{s+t} = u_s\sigma^\omega_t(u_t)$ \cite{Connes1973}. As a consequence, $\cN$ carries an \emph{intrinsic, i.e., state-independent notion of dynamics}, given by (the subgroup of) the automorphism group generated by the $\sigma^\omega$. In contrast, one can also study the operators $S_\omega$ in JBW algebras. However, without the existence of a dynamical correspondence (equivalently, time orientation), a JBW algebra cannot distinguish between the one-parameter families of automorphisms $\sigma^\omega_t$ and $\sigma^\omega_{-t}$ \cite{HaagerupHanche-Olsen1984}. 

What is more, the intrinsic dynamics in von Neumann algebras is further exemplified in the study of statistical mechanics in a background-independent setting \cite{Rovelli1993a,Rovelli1993b}. Here, $\omega \in \SN$ is understood as a state in thermodynamic equilibrium. In the setting of quantum statistical mechanics such states are characterised by the KMS condition \cite{HaagHugenholtzWinnink1967}. It is a remarkable fact that $\sigma^\omega$ satisfies the KMS condition for every faithful normal state $\omega \in \SN$ \cite{Takesaki1970}. In contrast, no analogue of this condition holds for Jordan algebras \cite{HaagerupHanche-Olsen1984}. In effect, time orientations in von Neumann algebras allow to interpret time from a thermodynamical standpoint, encoded in a state of thermodynamic equilibrium \cite{ConnesRovelli1994}.

The crucial role played by time orientations (equivalently, dynamical correspondences) in von Neumann algebras and in Thm.~\ref{thm: entanglement and time} is hardly coincidental. In particular, it is tempting to `explain' the intrinsic dynamics and (thermodynamic) origin of the arrow of time more fundamentally in terms of the entanglement structure of a given faithful normal state. 
To this end, one would like to generalise 
Thm.~\ref{thm: entanglement and time} to the setting of general von Neumann algebras.
We leave this and similar considerations for future work.


\section{Conclusion}\label{sec: discussion}

We found a necessary and sufficient criterion for bipartite entanglement using Stinespring dilations in Thm.~\ref{thm: dilated PPT criterion}. The latter adopts a clear physical meaning in terms of a compatibility condition with respect to time orientations (Def.~\ref{def: time-oriented state}) on the respective local observable algebras in Thm.~\ref{thm: entanglement and time}. Moreover, we highlighted the key role time orientations 
play within the broader picture of the intrinsic flow of time in von Neumann algebras. 

More explicitly, our results are motivated from and bear close resemblance with the PPT criterion \cite{Peres1996,Horodeckisz1996}. As such, it would be interesting to study the practical relevance of Thm.~\ref{thm: dilated PPT criterion}. For example, it seems possible that existing results on marginal extension problems, e.g. \cite{DohertyParriloSpedalieri2002,DohertyParriloSpedalierei2004}, can be strengthened using Cor.~\ref{cor: commutative extension problem}.\\


\paragraph*{Acknowledgements.} I thank Andreas D\"oring and Eric G. Cavalcanti for discussions. This work was supported by grant number FQXi-RFP-1807 from the Foundational Questions Institute and Fetzer Franklin Fund, a donor advised fund of Silicon Valley Community Foundation, and ARC Future Fellowship FT180100317.

\bibliographystyle{siam}
\bibliography{bibliography}

\end{document}